\newcolumntype{C}[1]{>{\centering\let\newline\\\arraybackslash\hspace{0pt}}m{#1}}
\newtheorem{definition}{Definition}
\newtheorem{theorem}{Theorem}
\newcommand{\iso}{\varphi}
\newcommand{\isiso}[1]{\simeq_{#1}}
\newcommand{\adv}{\mathcal{A}}
\newcommand{\dfd}{\mathcal{D}}
\newcommand{\neigh}[2]{N_{#2}(#1)}
\newcommand{\symdif}{\triangledown}
\newcommand{\dist}{\Delta}
\newcommand{\appredist}{\dist}
\newcommand{\appredistsyb}{\appredist_{syb}}
\newcommand{\appredistneigh}{\appredist_{neigh}}
\newcommand{\indsg}[2]{\langle #1 \rangle_{#2}} 
\newcommand{\wis}[2]{\langle #1 \rangle^{^{\!w}}_{#2}} 
\newcommand{\plussyb}[1]{#1^{+}}
\newcommand{\plussybg}{\plussyb{G}}
\newcommand{\pseudo}[1]{\iso#1}
\newcommand{\pseudog}{\pseudo{\plussybg}}
\newcommand{\transf}[1]{\transfName(#1)}
\newcommand{\transfg}{\transf{\pseudog}}
\newcommand{\transfName}{t}
\newcommand{\powerset}[1]{\mathcal{P}(#1)}
\newcommand{\myendeq}{\text{,}}
\newcommand{\myforall}[2]{\ensuremath{\forall_{#1 \in \{1, \ldots, #2\}}}}
\newcommand{\myrange}[2]{\ensuremath{#1 \in \{1, \ldots, #2\}}}
\newcommand{\afp}{\ensuremath{F}}
\newcommand{\threshold}{\ensuremath{\vartheta}}
\newcommand{\thresholdfp}{\ensuremath{\beta}}
\title{Robust active attacks on social graphs}
\author{Sjouke Mauw$^{1,2}$, Yunior Ram\'{i}rez-Cruz$^2$ 
and Rolando Trujillo-Rasua$^3$\\ 
{\small $^1$CSC, $^2$SnT, University of Luxembourg}\\ 
{\small 6, av. de la Fonte, L-4364 Esch-sur-Alzette, Luxembourg}\\ 
{\small $^3$School of Information Technology, Deakin University}\\ 
{\small 221 Burwood Hwy., Burwood VIC 3125, Australia}\\ 
{\small \{sjouke.mauw, yunior.ramirez\}\@@uni.lu, 
rolando.trujillo\@@deakin.edu.au}} 
\begin{document}
\maketitle

\begin{abstract} 
In order to prevent the disclosure of privacy-sensitive data, such as
names and relations between users, social network graphs have to be
anonymised before publication. Naive anonymisation of social network graphs 
often consists in deleting all identifying information of the users,
while maintaining the original graph structure. 
Various types of attacks on naively anonymised graphs have been developed.
Active attacks form a special type of such privacy attacks, in which
the adversary enrols a number of fake users, often called \emph{sybils}, 
to the social network, 
allowing the adversary to create unique structural patterns later used 
to re-identify the sybil nodes and other users
after anonymisation. Several studies have shown that adding a small
amount of noise to the published graph already suffices to mitigate
such active attacks. Consequently, active attacks have been dubbed a
negligible threat to privacy-preserving social graph publication. 
In this paper, we argue that these studies unveil shortcomings 
of specific attacks, rather than inherent problems of active attacks 
as a general strategy. 
In order to support this claim, we develop the notion of a
\emph{robust active attack}, which is an active attack that is
resilient to small perturbations of the social network graph. We
formulate the design of robust active attacks as an optimisation
problem and we give definitions of robustness for different stages of
the active attack strategy. Moreover, we introduce various heuristics
to achieve these notions of robustness and experimentally show that
the new robust attacks  are considerably more resilient than the
original ones, while remaining at the same level of feasibility. 
\end{abstract}

{\it Keywords: privacy-preserving social graph publication, active attacks}

\section{Introduction}

Data is useful. Science heavily relies on data to (in)validate 
hypotheses, discover new trends, tune up mathematical and computational models, 
etc. In other words, data collection and analysis is helping to cure 
diseases, build more efficient and environmentally-friendly buildings, take 
socially-responsible decisions,
understand our needs and those of the planet where we 
live. Despite these indisputable benefits, it is also a fact that data 
contains personal and potentially sensitive information, 
and this is where privacy and usefulness should be considered as a whole. 

A massive source of personal information is currently being handled by online 
social networks. People's life is often transparently reflected on 
popular social network platforms, such as Facebook, Twitter and Youtube. 
Therefore, releasing social network data for further study comes with a 
commitment to
ensure that users remain anonymous. Anonymity, however, is remarkably hard to 
achieve. Even a simple social graph, where an account consists of a user's 
pseudonym only and its relation to other accounts, allows users to be 
\emph{re-identified} by just considering the number of relations they
have~\cite{LT2008}. 

The use of 
pseudonyms is insufficient to guarantee anonymity. An attacker can 
cross-reference information from other sources, such as the number of 
connections, to find out the real user behind 
a pseudonym. Taking into account the type of information an attacker may have, 
called \emph{background} or \emph{prior} knowledge, is 
thus a common practice in anonymisation models. In a social graph, the 
adversary's background knowledge is regarded as any subgraph that is isomorphic 
to a subgraph in the original social graph. Various works bound 
the adversary's 
background knowledge to a specific family of graphs. For example, Liu and Terzi 
assume that the adversary's background knowledge is fully defined by star 
graphs\footnote{A star graph is a tree of depth one.}~\cite{LT2008}. This 
models an adversary that knows the degrees of the victim vertices. Others 
assume that an adversary may know the subgraphs induced by the neighbours of 
their victims~\cite{ZP2008}, and so on~\cite{ZCO2009}. 

A rather different notion of background knowledge was introduced by Backstrom 
et al.~\cite{BDK2007}. They describe an adversary able to register 
several (fake) accounts to the network, called \emph{sybil accounts}. 
The sybil accounts establish links between themselves and also with the victims. 
Therefore, in Backstrom et al.'s attack to a social graph 
$G=(V,E)$, the adversary's background knowledge is the induced 
subgraph formed by the sybil accounts in $G$ joined
with the connections to the victims. 

The adversary introduced by Backstrom et al.~\cite{BDK2007} is said to be 
\emph{active}, because he influences the structure of the social network. 
Previous authors have claimed that active attacks are either 
unfeasible or detectable. Such a claim is 
based on two observations. First, inserting many sybil nodes is hard, 
and they may be detected and removed by sybil detection techniques \cite{NS2009}. 
Second, active attacks have been reported to suffer from low resilience, 
in the sense that the attacker's ability to successfully recover the 
sybil subgraph and re-identify the victims is easily lost after a relatively 
small number of (even arbitrary) changes are introduced 
in the network \cite{Ji2015,MauwRamirezTrujillo2016,MRT2018,MauwTrujilloXuan2016}. 
As a consequence, active attacks have been largely overlooked in literature. 
Backstrom et al. argue for the feasibility of active attacks, showing that 
proportionally few sybil nodes (in the order of $\log n$ nodes for networks 
of order $n$) are sufficient for compromising any legitimate node. 
This feature of active attacks is relevant in view of the fact that sybil 
defence mechanisms 
do not attempt to remove every sybil node, but to limit their number 
to no more than $\log n$ \cite{Yu2006,Yu2008}, which entails that 
sufficiently capable sybil subgraphs are likely to go unnoticed 
by sybil defences. The second claim, that of lack of resiliency 
to noisy releases, is the main focus of this work. 
\vspace*{0.2cm}

\noindent \textbf{Contributions.}
In this paper we show that active attacks do constitute a serious threat for 
privacy-preserving publication of social graphs. We do so by proposing the 
first active attack strategy that features two key properties. 
Firstly, it can effectively re-identify users with a small number 
of sybil accounts. Secondly, it is resilient, in the sense that it resists 
not only the introduction of reasonable amounts of noise in the network, 
but also the application of anonymisation algorithms specifically  
designed to counteract active attacks. 
The new attack strategy is based on new notions of robustness for the sybil 
subgraph and the set of fingerprints, as well as noise-tolerant algorithms 
for sybil subgraph retrieval and re-identification. 
The comparison of the robust active attack strategy to the original 
active attack is facilitated by the introduction of a novel framework of study, 
which views an active attack as an attacker-defender game. 

The remainder of this paper is structured as follows. 
We enunciate our adversarial model in the form of an attacker-defender 
game in Section~\ref{sec-adversarial-model}. Then, the new notions of robustness 
are introduced in Section~\ref{sec-ideal-robust-attack}, 
and their implementation is discussed 
in Section~\ref{sec-robustisation-walk-based}. 
Finally, we experimentally evaluate our proposal in Section~\ref{sec-experiments}, 
discuss related work in Section~\ref{sec-related-work},  
and give our conclusions in Section~\ref{sec-conclusions}.

\section{Adversarial model}\label{sec-adversarial-model}

We design a game between an attacker $\adv$ and a defender $\dfd$.
The goal of the attacker is to identify the victim nodes after
pseudonymisation and transformation of the graph by the defender. 
We first introduce the necessary graph theoretical notation, and then
formulate the three stages of the attacker-defender game. 

\subsection{Notation and terminology}\label{sec-notation}

We use the following standard notation and terminology.
Additional notation that may be needed in other sections of the paper 
will be introduced as needed. 

\begin{itemize}

\item
A \emph{graph} $G$ is represented as a pair $(V,E)$, where $V$ is a set
of vertices (also called nodes) and $E\subseteq V\times V$ is a set of edges.
The vertices of $G$ are denoted by $V_G$ and its edges by $E_G$.
As we will only consider undirected graphs, we will consider an edge
$(v,w)$ as an unordered pair. We will use the notation $\mathcal{G}$ 
for the set of all graphs. 

\item
An \emph{isomorphism} between two graphs $G=(V,E)$ and $G'=(V',E')$ is
a bijective function $\iso\colon V \to V'$, such that
$\forall_{v1,v2\in V} . (v_1,v_2)\in E \iff (\iso(v_1),\iso(v_2))\in E'$.
Two graphs are \emph{isomorphic}, denoted by $G\isiso{\iso} G'$, or
briefly $G\isiso{} G'$, if there exists an isomorphism $\iso$ between them. 
Given a subset of vertices $S \subseteq V$, we will often use $\iso S$ to 
denote the set $\{\iso(v) | v \in S\}$.

\item
The set of \emph{neighbours} of a set of nodes
$W\subseteq V$ is defined by
$\neigh{W}{G} = \{v\in V\setminus W \mid \exists_{w\in W} . (v,w)\in
E \lor (w,v)\in E\}$.
If $W=\{w\}$ is a singleton set, we will write
$\neigh{w}{G}$ for $\neigh{\{w\}}{G}$. 
The \emph{degree} of a vertex $v\in V$, denoted as $\delta_G(v)$, is 
defined as $\delta_G(v)=|\neigh{v}{G}|$. 

\item
Let $G=(V,E)$ be a graph and let $S\subseteq V$.
The \emph{weakly-induced subgraph} of $S$ in $G$, denoted by
$\wis{S}{G}$, is the subgraph of $G$ with vertices $S\cup\neigh{S}{G}$ and
edges $\{(v,v')\in E \mid v\in S \lor v'\in S\})$. 
\end{itemize}

\subsection{The attacker-defender game}

The attacker-defender game starts with a graph $G=(V,E)$ representing a 
snapshot of a social network. The attacker knows a subset of the users, 
but not the connections between them. An example is shown 
in Figure~\ref{fig-active-attack}(a), where capital letters represent 
the real identities of the users and dotted lines represent 
the relations existing between them, which are not known to the adversary. 
Before a pseudonymised graph 
is released, the attacker manages to enrol sybil 
accounts in the network and establish links with the victims, as depicted 
in Figure~\ref{fig-active-attack}(b), where sybil accounts 
are represented by dark-coloured nodes and the edges known to the adversary 
(because they were created by her) are represented by solid lines. 
The goal of the attacker is to later re-identify the victims in order to learn 
information about them. 
The defender anonymises the social graph by removing the real user 
identities, or replacing them with pseudonyms, and possibly perturbing 
the graph. In Figure~\ref{fig-active-attack}(c) we illustrate the 
pseudonymisation process of the graph in Figure~\ref{fig-active-attack}(b). 
The pseudonymised graph contains information that the attacker wishes 
to know, such as the existence of relations between users, 
but the adversary cannot directly learn this information, as the identities 
of the vertices are hidden. Thus, after the pseudonymised graph is published, 
the attacker analyses the graph to first re-identify its own sybil accounts, 
and then the victims (see Figure~\ref{fig-active-attack}(d)). 
This allows her to acquire new information, which was supposed 
to remain private, such as the fact that $E$ and $F$ are friends.  

\begin{figure*}[htb!]
\newcommand{\scaleValue}{1.5}
\newcommand{\spaceValue}{1.5cm}
\centering
	\subfigure[A fragment of the original 
	graph. ]{\includegraphics[scale=\scaleValue]{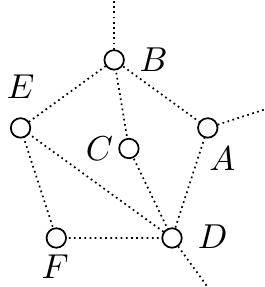}}
	\hspace{\spaceValue}
	\subfigure[The sybil nodes are added and fingerprints 
	are created for the victims.	\label{fig-sybils}]
	{\includegraphics[scale=\scaleValue]{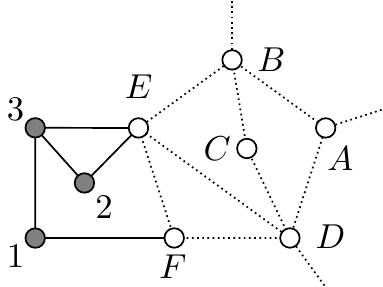}}\hspace{\spaceValue}
	\subfigure[Pseudonymised graph after 
	release.]{\includegraphics[scale=\scaleValue]{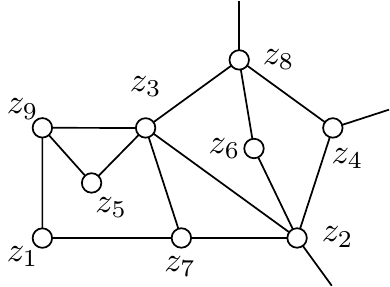}}
	\hspace{\spaceValue}
	\subfigure[The attacker subgraph is recovered and the 
	victims are re-identified. 
	\label{fig-reidentification}]{\includegraphics[scale=\scaleValue]{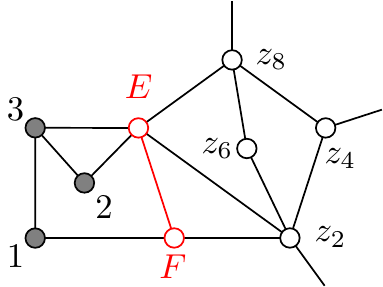}} 	 
\caption{The attacker-defender game.} 
\label{fig-active-attack}
\end{figure*}

Next, we formalise the three stages of the attacker-defender game, 
assuming a graph $G=(V,E)$. 

\begin{enumerate}

\item
\emph{Attacker subgraph creation.} 
The attacker constructs a set of \emph{sybil} nodes 
$S=\{x_1,x_2,$ $\ldots,x_{|S|}\}$, such 
that $S\cap V = \emptyset$ and a set of edges
$F\subseteq (S\times S) \cup (S\times V) \cup (V\times S)$. 
It clearly follows that $E\cap F = \emptyset$. 
We call $\plussybg =(V\cup S, E\cup F)$ the \emph{sybil-extended}
graph of $G$.
The attacker does not know the complete graph $\plussybg$, but he knows
$\wis{S}{\plussybg}$, the weakly-induced subgraph of $S$ in $\plussybg$. 
We say that $\wis{S}{\plussybg}$ is the \emph{attacker subgraph}. 
The attacker subgraph creation has two substages:
\begin{enumerate}
\item \emph{Creation of inter-sybil connections.} 
A unique (with high probability) and efficiently retrievable connection pattern 
is created between sybil nodes to facilitate the attacker's 
task of retrieving the sybil subgraph at the final stage. 
\item
\emph{Fingerprint creation.} 
For a given \emph{victim} vertex 
$y \in N_{\plussybg}(S) \setminus S$, 
we call the victim's neighbours in $S$, i.e 
$N_{\plussybg}(y)\cap S$, its \emph{fingerprint}. 
Considering the set of victim vertices $Y = \{y_1, \ldots, y_m\}$, 
the attacker ensures that $N_{\plussybg}(y_i)\cap S \neq 
N_{\plussybg}(y_j)\cap S$ 
for every $y_i,y_j\in Y$, $i\neq j$. 
\end{enumerate} 

\item
\emph{Anonymisation.} 
The defender obtains $\plussybg$ and constructs an isomorphism $\iso$
from $\plussybg$ to $\pseudog$. We call $\pseudog$ the
\emph{pseudonymised} graph. The purpose of pseudonymisation is to remove 
all personally identifiable information from the vertices of $G$. 
Next, given a non-deterministic procedure $\transfName$ 
that maps graphs to graphs, known by both $\adv$ and $\dfd$, the defender 
applies transformation $\transfName$ to $\pseudog$, 
resulting in the \emph{transformed} graph $\transfg$. 
The procedure $t$ modifies $\pseudog$ by adding and/or removing vertices 
and/or edges. 
\item
\emph{Re-identification.} After obtaining $\transfg$, the attacker executes the 
re-identification attack in two stages.
\begin{enumerate}
	\item \emph{Attacker subgraph retrieval.} Determine the isomorphism $\iso$ 
	restricted to the domain of sybil nodes $S$. 
	\item \emph{Fingerprint matching.} Determine the isomorphism $\iso$ 
	restricted to the domain of victim 
	nodes $\{y_1,y_2,\ldots,y_m\}$. 
\end{enumerate}
\end{enumerate}

As established by the last step of the attacker-defender game, we consider the 
adversary to succeed if she effectively determines the 
isomorphism $\iso$ restricted to the domain of victim nodes 
$\{y_1,y_2,\ldots,y_m\}$. That is, when the adversary re-identifies all victims 
in the anonymised graph. 

\section{Robust active attacks}\label{sec-ideal-robust-attack}

This section formalises robust active attacks. We provide 
mathematical formulations, in the form of optimisation problems, 
of the attacker's goals in the first and third stages. In particular, 
we address three of the subtasks that need to be accomplished in these 
stages: fingerprint creation, attacker subgraph retrieval and fingerprint matching. 

\subsection{Robust fingerprint creation}\label{sec-ideal-fp-creation}

Active attacks, in their original formulation \cite{BDK2007}, 
aimed at re-identifying victims in pseudo\-nym\-ised graphs. 
Consequently, the uniqueness of every fingerprint 
was sufficient to guarantee success with high probability, provided 
that the attacker subgraph is correctly retrieved. 
Moreover, as shown in \cite{BDK2007}, several types of randomly generated 
attacker subgraphs can indeed be correctly and efficiently retrieved, 
with high probability, after pseudonymisation. 
The low resilience reported for this approach when the pseudonymised graph 
is perturbed by applying an anonymisation 
method \cite{MauwRamirezTrujillo2016,MRT2018,MauwTrujilloXuan2016} 
or by introducing arbitrary changes \cite{Ji2015}, 
comes from the fact that it relies on finding 
exact matches between the fingerprints created by the attacker 
at the first stage of the attack and their images in $\transfg$. 
The attacker's ability to find such exact matches is lost even after 
a relatively small number of perturbations is introduced by $t$. 

Our observation is that setting for the attacker the goal of obtaining 
the exact same fingerprints in the perturbed graph is not only too strong, 
but more importantly, not necessary. Instead, we 
argue that it is sufficient for the attacker to
obtain a set of fingerprints that is 
close enough to the original set of fingerprints, for some notion of
closeness. Given 
that a fingerprint is a set of vertices, we propose to use the
cardinality of the 
symmetric difference of two sets to measure the distance between
fingerprints. The 
symmetric difference between two sets $X$ and 
$Y$, denoted by $X \symdif Y$, is the set of elements in $X \cup Y$ that are not 
in $X \cap Y$. We use $d(X, Y)$ to denote $|X \symdif Y|$.

Our goal at this stage of the attack is to create a set of fingerprints 
satisfying the following property. 

\begin{definition}[Robust set of fingerprints]\label{def-fp-creation-ideal}
Given a set of victims $\{y_1, 
\ldots, y_m \}$ and a set of sybil nodes $S$ in a graph $\plussybg$, the set 
of fingerprints $\{\afp_1, \ldots, \afp_m\}$ with $\afp_i = 
N_{\plussybg}(y_i)\cap S$ is 
said to be \emph{robust} if it maximises 
\begin{equation}\label{eq-obj-fn-robust-fp}
\min_{1 \leq i < j \leq m} \{d(\afp_i, \afp_j)\}\text{.}
\end{equation}
\end{definition}

The property above ensures that the lower bound on the distance between any 
pair of fingerprints is maximal. In what follows, we will refer 
to the lower bound defined by Equation~(\ref{eq-obj-fn-robust-fp}) 
as \emph{minimum separation} of a set of fingerprints. 
For example, in Figure~\ref{fig-active-attack}(b), the fingerprint of the 
vertex $E$ with respect to the set of attacker vertices $\{1, 2, 
3\}$ is $\{2, 3\}$, and the fingerprint of the vertex $F$ is 
$\{1\}$. This gives a minimum 
separation between the two victim's fingerprints equal to $| \{2, 3\} \symdif 
\{1\} | = | \{1, 2, 3\} | = 3$, which is maximum.
Therefore, given attacker vertices $\{1,2,3\}$, the set of
fingerprints $\{\{2,3\},\{1\}\}$ is robust for the set of victim nodes
$\{E,F\}$.

Next we prove that, if the distance between each original fingerprint $F$ 
and the corresponding 
anonymised fingerprint $\iso F$ is less than half the minimum separation, 
then the distance between $F$ and any other anonymised 
fingerprint, say $\iso F'$, is strictly larger than half the minimum 
separation. 

\begin{theorem}\label{theo-separation}
Let $S$ be the set of sybil nodes, let $\{y_1, 
\ldots, y_m \}$ be the set of victims and let
$\{\afp_1, \ldots, \afp_m\}$ be their fingerprints with minimum separation 
$\delta$. 
Let  $\afp_i'$ be the fingerprint of $\iso y_i$ in the anonymised graph 
$\transfg$, for $\myrange{i}{m}$. Then,

\[
\myforall{i}{m} d(\iso F_i, F_i') < \delta/2 \implies \myforall{i, j}{m} i \neq 
j 
\implies d(\iso F_i, 
F_j') > \delta/2 \myendeq
\]

\end{theorem}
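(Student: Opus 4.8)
The plan is to exploit the fact that the set-distance $d(X,Y) = |X \symdif Y|$ is a genuine metric on finite sets; in particular it satisfies the triangle inequality. The whole argument then reduces to a single application of this inequality, combined with the observation that the pseudonymisation isomorphism $\iso$ preserves distances between fingerprints.

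First I would record two elementary facts. The first is that $\iso$, being a bijection on vertices, preserves the symmetric-difference distance: for any $A, B \subseteq V_{\plussybg}$ we have $\iso(A \symdif B) = \iso A \symdif \iso B$ (by injectivity of $\iso$), and hence $d(\iso A, \iso B) = |A \symdif B| = d(A,B)$. Applying this to $A = F_i$ and $B = F_j$ gives $d(\iso F_i, \iso F_j) = d(F_i, F_j) \geq \delta$ for every $i \neq j$, since $\delta$ is the minimum separation of the original fingerprints $F_1,\ldots,F_m$. The second fact is the triangle inequality for $d$, which follows from the set inclusion $X \symdif Z \subseteq (X \symdif Y) \cup (Y \symdif Z)$.

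Next, fixing any pair $i \neq j$ under the hypothesis of the implication, I would apply the triangle inequality to the three sets $\iso F_i$, $F_j'$ and $\iso F_j$:
\[
d(\iso F_i, \iso F_j) \leq d(\iso F_i, F_j') + d(F_j', \iso F_j).
\]
Rearranging and using the lower bound $d(\iso F_i, \iso F_j) \geq \delta$ together with the hypothesised upper bound $d(\iso F_j, F_j') < \delta/2$ yields
\[
d(\iso F_i, F_j') \geq d(\iso F_i, \iso F_j) - d(\iso F_j, F_j') > \delta - \delta/2 = \delta/2,
\]
which is exactly the desired conclusion. Note that the strictness of the final inequality is inherited directly from the strictness of the hypothesised bound on the drift $d(\iso F_j, F_j')$ of the $j$-th fingerprint.

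There is no genuine obstacle here; the only point requiring care is to apply the triangle inequality to the correct triple, so that the two available bounds combine with the right signs: the lower bound $\delta$ on the separation $d(\iso F_i, \iso F_j)$ of the images, and the upper bound $\delta/2$ on the drift of the $j$-th fingerprint. It is also worth stressing that the distance-preservation of $\iso$ is precisely what allows us to transfer the separation of the original fingerprints to their images $\iso F_i$, which are the quantities actually compared against the anonymised fingerprints $F_j'$.
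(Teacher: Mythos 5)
Your proof is correct and is essentially the same argument as the paper's: both rest on the distance preservation $d(\iso F_i,\iso F_j)=d(F_i,F_j)\geq\delta$ and a single application of the triangle inequality to the triple $\iso F_i$, $F_j'$, $\iso F_j$, combined with the hypothesised bound $d(\iso F_j,F_j')<\delta/2$. The only difference is presentational --- you rearrange the inequality directly, while the paper phrases the same chain as a proof by contradiction --- so there is nothing substantive to distinguish the two.
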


\begin{proof}
In order to reach a contradiction, we assume that $d(\iso F_i, F_j')
\leq \delta/2$ for some $i, j \in \{1, \ldots, m\}$ 
with 
$i \neq j$. Because $d(\iso F_j, F_j') < \delta/2$, we have $d(\iso F_i, F_j') +
d(\iso F_j, F_j') < \delta$. By the triangle inequality we obtain that 
$d(\iso F_i, \iso F_j 
) < d(\iso F_i, F_j') + d(\iso F_j, F_j') < \delta$. Hence $d(\iso F_i, \iso 
F_j)$ is lower than 
the minimum separation of $\{\afp_1, \ldots, \afp_m\}$, which yields a
contradiction given that $d(\iso F_i, \iso 
F_j) = d(F_i, F_j) \geq \delta$. 
\end{proof}

We exploit Theorem~\ref{theo-separation} later in the 
fingerprint matching step through the following corollary. 
If $\delta/2$ is the maximum distance shift from an original 
fingerprint $F_i$ of $y_i$ to the fingerprint $F_i'$ of $y_i$ in the perturbed 
graph, then for every $\afp \in \{\afp_1', \ldots, \afp_m'\}$ it holds that 
$d(\afp, \iso \afp_i) < \delta/2 \iff \afp = \iso \afp_i$. 
In other words, given a set of victims for which a set of fingerprints needs 
to be defined, the larger the minimum separation of these fingerprints, 
the larger the number of perturbations that can be tolerated 
in $\transfg$, while still being able to match the perturbed fingerprints 
to their correct counterparts in $\plussybg$. 

As illustrated earlier in our running example, the fingerprints of $E$ and $F$ 
are $\{2, 3\}$ and $\{1\}$, respectively, which gives a minimum separation 
of $\delta = 3$. Theorem~\ref{theo-separation} states that, if after 
anonymisation 
of the graph, the fingerprints of $E$ and $F$ become, say $\{2\}$ and $\{1, 
2\}$, 
respectively, then it must hold that $|\{2, 3\} \symdif \{1, 2\}| > 3/2$ and 
$|\{1\} \symdif 
\{2\}| > 3/2$, while $|\{2, 3\} \symdif \{2\}| < 3/2$ and $|\{1\} \symdif 
\{1, 2\}| < 3/2$. This makes it easy to match the original fingerprint, say 
$\{2, 3\}$, with the correct perturbed fingerprint $\{2\}$ by calculating their 
distance and verifying that it remains below the threshold $\delta/2$. 
In Subsection~\ref{sec-ssgr-fp-creation}, we will describe an efficient 
algorithm for addressing this optimisation problem. 

\subsection{Robust attacker subgraph retrieval}
\label{subsec-syb-retr-ideal}

Let $\mathcal{C}=\{\wis{X}{\transfg} \mid X\subseteq V_{\transfg}, 
|X|=|S|, \wis{X}{\transfg}\cong \wis{S}{\plussybg}\}$ be the set 
of all subgraphs isomorphic to the attacker subgraph 
$\wis{S}{\plussybg}$ and weakly induced in $\transfg$ by a vertex subset 
of cardinality $|S|$. The original active attack formulation \cite{BDK2007}
assumes  
that $|\mathcal{C}|=1$ and that the subgraph in $\mathcal{C}$ is the 
image of the attacker 
subgraph after pseudonymisation. This assumption, for example, holds on 
the pseudonymised graph in Figure~\ref{fig-active-attack}(c). But it 
hardly holds on perturbed graphs as demonstrated 
in~\cite{Ji2015,MauwRamirezTrujillo2016,MRT2018,MauwTrujilloXuan2016} . In 
fact, 
$\mathcal{C}$ becomes empty 
by simply adding an edge between any pair of attacker nodes, which makes the 
attack fail quickly when increasing the amount of perturbation. 

To account for the occurrence of perturbations in releasing $\transfg$, 
we introduce the notion of robust attacker subgraph 
retrieval. Rather than limiting the retrieval process to finding 
an exact match of the original attacker subgraph, we consider that 
it is enough to find a sufficiently similar subgraph, thus adding some 
level of noise-tolerance. By ``sufficiently similar", we mean a graph 
that minimises some graph dissimilarity measure 
$\dist\colon\mathcal{G}\times \mathcal{G}\rightarrow\mathbb{R}^+$ with respect 
to $\wis{S}{\plussybg}$. The problem is formulated as follows. 
 
\begin{definition}[Robust attacker subgraph retrieval problem]
\label{def-ideal-sybil-subgraph-retrieval} 
Given a graph dissimilarity measure 
$\dist\colon\mathcal{G}\times \mathcal{G}\rightarrow\mathbb{R}^+$, 
and a set $S$ of sybil nodes in the graph $\plussybg$, 
find a set $S' \subseteq V_{\transfg}$ that minimises 
\begin{equation}\label{eq-obj-fn-robust-syb-subgr}
\dist(\wis{S'}{\transfg}, \wis{S}{\plussybg}).
\end{equation}
\end{definition}
A number of graph (dis)similarity measures have been proposed 
in the literature \cite{SF1983,BDK2007,Bunke2000,Fober2013,Mallek2015}.  
Commonly, the choice of a particular measure is \emph{ad hoc}, 
and depends on the characteristics of the graphs being compared. 
In Subsection~\ref{sec-robust-retrieval}, we will describe 
a measure that is efficiently computable 
and exploits the known structure of $\wis{S}{\plussybg}$, 
by separately accounting for inter-sybil and sybil-to-non-sybil edges. 
Along with this dissimilarity measure, we provide an algorithm 
for constructively finding a solution to the problem enunciated 
in Definition~\ref{def-ideal-sybil-subgraph-retrieval}.  

\subsection{Robust fingerprint matching}

As established by the attacker-defender game discussed in 
Section~\ref{sec-adversarial-model}, fingerprint matching 
is the last stage of the active attack. Because it clearly relies on the 
success of the previous steps, we make the following two assumptions 
upfront.

\begin{enumerate}
	\item We assume that the robust sybil subgraph  
	retrieval procedure succeeds, i.e. that $\iso S = S'$ where $S'$ is the set 
	of 
	sybil nodes obtained in the previous step. 
	\item  Given the original set 
of victims $Y$, we assume that the 
	set of vertices in the neighbourhood of $S'$ 
	contains those in $\iso Y$, i.e. $\iso Y \subseteq 
N_{\transfg}(S')\setminus S'$, otherwise $S'$ is insufficient information to 
achieve the 
goal of re-identifying all victim vertices. 

\end{enumerate}

Given a potentially correct set of sybil nodes $S'$ and a set of potential 
victims $Y'=\{y_1', \ldots, y_n' \} = N_{\transfg}(S')\setminus S'$, 
the re-identification process consists in determining the isomorphism $\iso$ 
restricted to the vertices in $Y'$. Next we define re-identification as an 
optimisation problem, and after that we provide sufficient conditions under 
which a solution leads to correct identification.

\begin{definition}[Robust re-identification 
problem]\label{def-fp-match-ideal}
Let $S$ and $S'$ be the set of sybil nodes in the original and anonymised 
graph, respectively. Let $\{y_1, \ldots, y_m \}$ be the victims in $\plussybg$ 
with fingerprints $F_1 = N_{\plussybg}(y_1)\cap S, \ldots, F_m = 
N_{\plussybg}(y_m)\cap S$. 
The \emph{robust 
re-identification problem} consists in finding an isomorphism
$\phi\colon S \rightarrow 
S'$ and subset 
$\{z_1, \ldots, z_m\} \subseteq N_{\transfg}(S')\setminus S'$ that minimises
\begin{equation}\label{eq-obj-fn-robust-reident}
\lVert (d(\phi F_1, N_{\transfg}(z_1)\cap S'), \ldots, d(\phi F_m, 
N_{\transfg}(z_m)\cap S')) \rVert_{\infty}.
\end{equation}
where $\lVert . \rVert_{\infty}$ stands for the infinity norm.
\end{definition}

Optimising the infinity norm gives the 
lowest upper bound on the distance between 
an original fingerprint and the fingerprint of a vertex in the perturbed graph. 
This is useful towards the goal of correctly re-identifying all victims. 
However, should the adversary aims at re-identifying at least one victim with 
high probability, then other plausible objective functions can be used, such as 
the Euclidean norm.

As stated earlier, our intention is to exploit 
the result of Theorem~\ref{theo-separation}, 
provided that the distance between original and perturbed fingerprints is lower 
than $\delta/2$, where $\delta$ is the minimum separation of the original set 
of fingerprints. This is one out of three conditions that we prove sufficient 
to infer a correct mapping  
$\iso$ from a solution to the robust re-identification problem, 
as stated in the following result. 

\begin{theorem}\label{theo-conds}
Let $\phi\colon S \rightarrow S'$ and $\{z_1, \ldots, z_m\}$ be a solution to 
the robust re-identification problem defined by  the set of 
sybil nodes $S$ in the original graph $\plussybg$, the set of 
sybil nodes $S'$ in the anonymised 
graph $\transfg$, and the set of victims $\{y_1, \ldots, y_m \}$ in 
$\plussybg$. Let $\{F_1, \ldots, F_m\}$ be the set of fingerprints of $\{y_1, 
\ldots, y_m\}$ and $\delta$ its minimum separation. If the following three 
conditions hold:
\begin{enumerate}	
	\item $\phi$ is a subset of $\iso$, i.e. $\forall_{x \in S} \phi(x) = 
	\iso(x)$
	\item $\{\iso(y_1), \ldots, \iso(y_m)\} = 
	N_{\transfg}(S')\setminus S'$ 
	\item For every $y_i \in \{y_1, \ldots, y_m \}$, $d(\iso \afp_i, \afp_i')
	< 	\delta/2$ where $F_i' = N_{\transfg}(\iso(y_i))\cap S'$,
\end{enumerate}
then $\iso(y_i) = z_i$ for every $\myrange{i}{m}$.
\end{theorem}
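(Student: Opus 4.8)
The plan is to show that the objective value attained by the optimal solution must lie strictly below $\delta/2$, and then to invoke Theorem~\ref{theo-separation} to argue that this bound forces each $z_i$ to coincide with $\iso(y_i)$. First I would use condition~1: since $\phi$ agrees with $\iso$ on all of $S$, we have $\phi F_i = \iso F_i$ for every $i$, so I may freely replace $\phi F_i$ by $\iso F_i$ throughout. Next I would exhibit a feasible point of the robust re-identification problem whose objective value is explicitly controlled, namely the ``correct'' assignment $z_i^\star = \iso(y_i)$ together with the same $\phi = \iso\!\mid_S$. Condition~2 guarantees that each $\iso(y_i) \in N_{\transfg}(S')\setminus S'$, so this assignment is indeed feasible, and its objective value is $\max_i d(\iso F_i, N_{\transfg}(\iso(y_i))\cap S') = \max_i d(\iso F_i, F_i')$, which by condition~3 is strictly less than $\delta/2$.

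Since $(\phi, \{z_1,\ldots,z_m\})$ is an \emph{optimal} solution, its objective value is at most that of the feasible point just constructed, and hence is also strictly below $\delta/2$. Because the infinity norm is the maximum over the coordinates, this gives the coordinate-wise bound $d(\iso F_i, N_{\transfg}(z_i)\cap S') < \delta/2$ for every $\myrange{i}{m}$.

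Finally I would pin down each $z_i$. Using condition~2 again, every $z_i$ lies in $N_{\transfg}(S')\setminus S' = \{\iso(y_1),\ldots,\iso(y_m)\}$, so $z_i = \iso(y_{\sigma(i)})$ for some index $\sigma(i)$, and therefore $N_{\transfg}(z_i)\cap S' = F_{\sigma(i)}'$. The bound from the previous paragraph then reads $d(\iso F_i, F_{\sigma(i)}') < \delta/2$. On the other hand, condition~3 is precisely the hypothesis of Theorem~\ref{theo-separation}, whose conclusion yields $d(\iso F_i, F_j') > \delta/2$ for every $j \neq i$. These two statements are incompatible unless $\sigma(i) = i$, whence $z_i = \iso(y_i)$, as required. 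The main obstacle, and the only place where genuine content enters, is the opening observation that the correct re-identification is itself a feasible point of the optimisation whose value is bounded by condition~3; once this feasibility-plus-optimality comparison is in place, the separation guarantee of Theorem~\ref{theo-separation} finishes the argument by a short contrapositive. A minor point worth noting is that the $z_i$ need not be assumed distinct: the reasoning is applied coordinate-wise and identifies each $z_i$ on its own.
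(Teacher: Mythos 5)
Your proof is correct and takes essentially the same route as the paper's: both hinge on comparing the given optimal solution against the feasible ``correct'' assignment $z_i^\star = \iso(y_i)$, whose objective value is below $\delta/2$ by conditions 2--3, and then invoking Theorem~\ref{theo-separation} to rule out any mismatched assignment. The only difference is presentational --- the paper phrases the final step as a contradiction via a permutation $f$ of the indices, whereas you argue directly coordinate-by-coordinate, which (as you note) has the minor advantage of not needing the $z_i$ to be pairwise distinct.
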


\begin{proof}
From the third condition we obtain that the correct mapping $\iso$ satisfies, 

\[
\max\{d(\iso F_1, F_1'), \ldots, d(\iso F_m, 
F_m')\} < \delta/2 \myendeq
\]

Now, the second condition gives that $\{\iso(y_1), \ldots, 
\iso(y_m)\} = \{z_1, \ldots, z_m\}$. This means that, for every  $i 
\in \{1, \ldots, m\}$, $F_i' = N_{\transfg}(z_j)\cap S'$ for some $j 
\in \{1, \ldots, m\}$. Let $f$ be an automorphism in $\{1, \ldots, m\}$ such 
that $F_i' = 
N_{\transfg}(z_{f(i)})\cap S'$ for every $i \in \{1, \ldots, m\}$. We use 
$f^{-1}$ to denote the inverse of $f$. Then, considering that $\phi F_i = \iso 
F_i$ for every $i 
\in \{1, 
\ldots, m\}$ (first condition), we obtain the following 
equalities.

\begin{align*}
& \max\{d(\phi F_1, N_{\transfg}(z_1)\cap S'), \ldots, d(\phi F_m, 
N_{\transfg}(z_m)\cap S')\} = \\
& \max\{d(\iso F_1, N_{\transfg}(z_1)\cap S'), \ldots, d(\iso F_m, 
N_{\transfg}(z_m)\cap S')\} = \\
& \max\{d(\iso F_1, F'_{f^{-1}(1)}), \ldots, d(\iso F_m, 
F'_{f^{-1}(m)})\}
\end{align*}

Considering Theorem~\ref{theo-separation}, we obtain that for every $i, j 
\in \{1, \ldots, m\}$ with $i \neq j$ 
it holds that $d(\iso F_i, F_j') > \delta/2$. Therefore, if $f$ is not the 
trivial automorphism, i.e. $f(i) = i \; \forall i \in \{1, \ldots, m\}$, then 
$\max\{d(\iso F_1, F'_{f^{-1}(1)}), \ldots, d(\iso F_m, 
F'_{f^{-1}(m)})\} > \delta/2$. This implies that, 

\[
\max\{d(\phi F_1, N_{\transfg}(z_1)\cap S'), \ldots, d(\phi F_m, 
N_{\transfg}(z_m)\cap S')\} > \delta/2 \myendeq
\]

However, this contradicts the optimality of the solution $\phi$ and $\{z_1, 
\ldots, z_m\}$. Therefore, $f$ must be the trivial automorphism, which 
concludes the proof.
\end{proof}

In Theorem~\ref{theo-conds}, the first condition states that the adversary 
succeeded on correctly identifying each of her own sybil nodes in the perturbed 
graph. That is to say, the adversary retrieved the mapping $\iso$ 
restricted to 
the set of victims. This is clearly an important milestone in the attack as 
victim's fingerprints are based on such mapping. The second condition says that 
the neighbours of the sybil vertices remained the same after perturbation. As a 
result, the adversary knows that $\{z_1, \ldots, z_m\}$ is the victim set in 
the perturbed graph, but 
she does not know yet the 
isomorphism $\iso$ restricted to the set of victims $\{y_1, \ldots, y_m\}$.
Lastly, the third condition states that $\delta/2$ is an upper bound on the 
distance between a 
victim's fingerprints in the pseudonymised graph $\pseudog$ and the perturbed 
graph $\transfg$, where $\delta$ is the the minimum 
separation between the victim's fingerprints. In other words, the 
transformation method did not perturbed a victim's fingerprint ``too much''. 
If those three conditions hold, Theorem~\ref{theo-conds} shows that the 
isomorphism $\iso$ restricted to the set of victims $\{y_1, \ldots, y_m\}$ is 
the trivial isomorphism onto $\{z_1, \ldots, z_m\}$.

\bigskip
\noindent
\textbf{Summing up:} in this section we have enunciated the three problem 
formulations for robust active attacks, namely: 
\begin{itemize}
\item Creating a robust set of fingerprints. 
\item Robustly retrieving the attacker subgraph in the perturbed graph. 
\item Robustly matching the original fingerprints to perturbed fingerprints. 
\end{itemize}  
Additionally, we have defined a set of conditions under which finding 
a solution for these problems guarantees a robust active attack to be successful. 
Each of the three enunciated problem has been stated as an optimisation task. 
Since obtaining exact solutions to these problems is computationally expensive, 
in the next section we introduce heuristics for finding approximate solutions.  

\section{Heuristics for an approximate instance 
of the robust active attack strategy} 
\label{sec-robustisation-walk-based} 

In this section we present the techniques for creating an instance 
of the robust active attack strategy described in the previous section. 
Since finding exact solutions to the optimisation problems 
in Equations~(\ref{eq-obj-fn-robust-fp}), (\ref{eq-obj-fn-robust-syb-subgr}) 
and~(\ref{eq-obj-fn-robust-reident}) is computationally expensive, 
we provide efficient approximate heuristics. 

\subsection{Attacker subgraph creation}\label{sec-ssgr-fp-creation} 

For creating the internal links of the sybil subgraph, we will use 
the same strategy as the so-called \emph{walk-based attack} \cite{BDK2007}, 
which is the most widely-studied instance of the original active attack strategy. 
By doing so, we make our new 
attack as (un)likely as the original to have the set of sybil nodes removed 
by sybil defences. 
Thus, for the set of sybil nodes $S$, the attack will set an arbitrary  
(but fixed) order among the elements of $S$. Let $x_1,x_2,\ldots,x_{|S|}$ 
represent the vertices of $S$ in that order. The attack will firstly create 
the path $x_1x_2\ldots x_{|S|}$, whereas the remaining inter-sybil edges 
are independently created with probability~$0.5$. 

For creating the set of fingerprints, we will apply a greedy algorithm 
for maximising the minimum separation defined 
in Equation~(\ref{eq-obj-fn-robust-fp}). The idea behind the algorithm 
is to arrange all possible fingerprints in a grid-like auxiliary graph, 
in such a way that nodes representing similar 
fingerprints are linked by an edge, and nodes representing well-separated 
fingerprints are not. Looking for a set of maximally separated fingerprints 
in this graph reduces to a well-known problem in graph theory, 
namely that of finding an independent set. An \emph{independent set} $I$ 
of a graph $G$ is a subset of vertices 
from $G$ such that $E_{\indsg{I}{G}}=\emptyset$, that is, all vertices in $I$ 
are pairwise not linked by edges. 
If the graph is constructed in such a way that every pair of fingerprints 
whose distance is less then or equal to some value $i$, then an independent set 
represents a set of fingerprints with a guaranteed minimum separation 
of at least $i+1$. For example, the fingerprint graph shown 
in Figure~\ref{fig-fp-graph} (a) represents the set of fingerprints 
$\{\{1\},\{2\},\{3\},\{1,2\},\{1,3\},\{2,3\},\{1,2,3\}\}$, 
with edges linking all pairs $X,Y$ of fingerprints such that $d(X,Y)\le 1$, 
whereas Figure~\ref{fig-fp-graph} (b) represents an analogous graph 
where edges link all pairs $X,Y$ of fingerprints such that $d(X,Y)\le 2$. 
Note that the vertex set of both graphs is the power set of $\{1,2,3\}$, 
except for the empty set, which does not represent a valid fingerprint, 
as every victim must be linked to at least one sybil node. 
A set of fingerprints built from an independent set of the first graph 
may have minimum separation~$2$ (e.g. $\{\{1\},\{2\},\{1,2,3\}\}$) 
or $3$ (e.g. $\{\{1,3\},\{2\}\}$), whereas a set of fingerprints built 
from an independent set of the second graph will have minimum separation $3$ 
(the independent sets of this graph are $\{\{1\},\{2,3\}\}$, $\{\{1,2\},\{3\}\}$ 
and $\{\{1,3\},\{2\}\}$). 

\begin{figure}[!ht]

\begin{center}

\begin{tikzpicture}[inner sep=0.7mm, place/.style={circle,draw=black,
fill=white},xx/.style={circle,draw=black!99, 
fill=black!99},gray1/.style={circle,draw=black!99, 
fill=black!25},gray2/.style={circle,draw=black!99, 
fill=black!50},gray3/.style={circle,draw=black!99,fill=black!75},
gray2d/.style={diamond,draw=black!99,fill=black!50},
transition/.style={rectangle,draw=black!50,fill=black!20,thick}, 
line width=.5pt]

\def\radius{1.5cm}
\def\lblradius{2cm}

\def\n{6}

\foreach \ind in {1,...,\n}\pgfmathparse{90+360/\n*\ind}\coordinate (g1\ind) at 
(\pgfmathresult:\radius);

\foreach \ind in {1,...,\n}\pgfmathparse{90+360/\n*\ind}\coordinate (v1\ind) at 
(\pgfmathresult:\lblradius);

\coordinate (cv1) at (0,0);

\draw[black] (g11) -- (g12) -- (g13) -- (g14) -- (g15) -- (g16) -- cycle;
\draw[black] (cv1) -- (g11);
\draw[black] (cv1) -- (g13);
\draw[black] (cv1) -- (g15);

\foreach \ind in {1,...,\n} \node [place] at (g1\ind) {};
\node [place] at (cv1) {};

\node at (v11) {$\{1,3\}$};
\node at (v12) {$\{1\}$};
\node at (v13) {$\{1,2\}$};
\node at (v14) {$\{2\}$};
\node at (v15) {$\{2,3\}$};
\node at (v16) {$\{3\}$};
\coordinate [label=center:{\small{$\{1,2,3\}$}}] (lv1) at (0.67,-0.27);

\coordinate [label=center:{\small{(a)}}] (al) at (0,-2.5);

\foreach \ind in {1,...,\n}\pgfmathparse{90+360/\n*\ind}\coordinate[xshift=5cm] (g2\ind) at 
(\pgfmathresult:\radius);

\foreach \ind in {1,...,\n}\pgfmathparse{90+360/\n*\ind}\coordinate[xshift=5cm] (v2\ind) at 
(\pgfmathresult:\lblradius);

\coordinate (cv2) at (5,0);

\draw[black] (g21) -- (g22) -- (g23) -- (g24) -- (g25) -- (g26) -- cycle;
\draw[black] (g21) -- (g23) -- (g25) -- cycle;
\draw[black] (g22) -- (g24) -- (g26) -- cycle;
\foreach \ind in {1,...,\n} \draw[black] (cv2) -- (g2\ind);

\foreach \ind in {1,...,\n} \node [place] at (g2\ind) {};
\node [place] at (cv2) {};

\node at (v21) {$\{1,3\}$};
\node at (v22) {$\{1\}$};
\node at (v23) {$\{1,2\}$};
\node at (v24) {$\{2\}$};
\node at (v25) {$\{2,3\}$};
\node at (v26) {$\{3\}$};
\coordinate [label=center:{\tiny{$_{\{1,2,3\}}$}}] (lv2) at (5.51,-0);

\coordinate [label=center:{\small{(b)}}] (al) at (5,-2.5);

\end{tikzpicture}

\end{center}

\caption{The fingerprint graphs $(\powerset{\{1,2,3\}}\setminus \{\emptyset\}, 
\{(X,Y)\;\mid\; X,Y\in \powerset{\{1,2,3\}}\setminus \{\emptyset\}, X\neq Y, 
d(X,Y)\le i\})$ for (a) $i=1$ and (b) $i=2$.} 
\label{fig-fp-graph} 
\end{figure} 

Our fingerprint generation method iteratively creates increasingly 
denser fingerprint graphs. The vertex set of every graph is the set of possible 
fingerprints, i.e.\ all subsets of $S$ except the empty set. 
In the $i$-th graph, every pair of nodes $X,Y$ 
such that $d(X,Y)\le i$ will be linked by an edge. 
Thus, an independent set of this graph will be composed of nodes representing 
fingerprints whose minimum separation is at least $i+1$. 
A maximal\footnote{The maximum independent set problem is NP-hard, 
so it is infeasible to exactly compute a maximum independent set of every graph. 
Alternatively, we use a well known greedy approximation, 
which consists in iteratively finding a minimum-degree non-isolated vertex, 
and removing all its neighbours, 
until obtaining an empty graph, whose vertex set is an independent 
set of the original graph.} independent set of the fingerprint graph 
is computed in every iteration, 
to have an approximation of a maximum-cardinality set of uniformly 
distributed fingerprints with minimum separation at least $i+1$. 
For example, in the graph of Figure~\ref{fig-fp-graph} (a), 
the method will find $\{\{1\},\{2\},\{3\},\{1,2,3\}\}$ as a maximum-cardinality 
set of uniformly distributed fingerprints with minimum separation $2$; 
whereas for the graph of Figure~\ref{fig-fp-graph} (b), 
the method will find, for instance, $\{\{1\},\{2,3\}\}$ 
as a maximum-cardinality set of uniformly distributed fingerprints 
with minimum separation $3$. 
The method iterates until finding the smallest maximal independent set 
that still contains at least $m$ fingerprints, that is, at least 
as many fingerprints as original victims, so a different fingerprint 
can be assigned to every victim.
If this set does not contain 
exactly $m$ fingerprints, it is used as a pool, from which successive 
runs of the attack randomly draw $m$ fingerprints. 
Algorithm~\ref{alg-unif-distr-fps} lists the pseudo-code of this method. 

\begin{algorithm}
\caption{Given a set $S$ of sybil nodes, compute a uniformly distributed set 
of fingerprints of size at least $m$.} 
\label{alg-unif-distr-fps}
\begin{algorithmic}[1]
\State $i\gets 1$
\State $G_{F}^{(1)}\gets (\powerset{S}\setminus \emptyset, 
\{(X,Y)\;\mid\; X,Y\in \powerset{S}\setminus \emptyset, X\neq Y, d(X,Y)\le 1\})$ 
\State $I_1\gets\textsc{MaxIndSet}\left(G_{F}^{(1)}\right)$
\Repeat\label{st-repeat}
  \State $F\gets I_i$  
  \State $i\gets i+1$
  \State $G_{F}^{(i)}\gets (\powerset{S}\setminus \emptyset, 
  \{(X,Y)\;\mid\; X,Y\in \powerset{S}\setminus \emptyset, X\neq Y, d(X,Y)\le i\})$ 
  \State $I_i\gets\textsc{MaxIndSet}\left(G_{F}^{(i)}\right)$
\Until{$|I_i|<m$}\label{st-until} 
\State \Return $F$
\Statex\hrulefill
\end{algorithmic}

\begin{algorithmic}[1]
\Function{MaxIndSet}{$G=(V_G,E_G)$}
\Repeat
  \State $v\gets\arg\min_{v\in V_G, \delta_G(v)\neq 0}\left\{\delta_G(v)\right\}$ 
  \State $E_G\gets E_G\setminus\{(v,w)\;|\;w\in N_G(v)\}$
  \State $V_G\gets V_G\setminus N_G(v)$
\Until{$E_G=\emptyset$}
\State \Return $V_G$
\EndFunction
\end{algorithmic}
\end{algorithm} 

In Algorithm~\ref{alg-unif-distr-fps}, the order of every graph $G_F^{(i)}$ 
is $2^{|S|}-1$. Thus, the time complexity of every graph construction 
is $\mathcal{O}\left(\left(2^{|S|}\right)^2\right)
=\mathcal{O}\left(2^{2|S|}\right)$. Moreover, the greedy algorithm for finding 
a maximal independent set runs in quadratic 
time with respect to the order of the graph, so in this case its time complexity 
is also $\mathcal{O}\left(2^{2|S|}\right)$. Finally, since the maximum possible 
distance between a pair of fingerprints is $|S|$, the worst-case 
time complexity of Algorithm~\ref{alg-unif-distr-fps} 
is $\mathcal{O}\left(|S|\cdot 2^{2|S|}\right)$. 
This worst case occurs when the number of victims is very small, as the number 
of times that steps~\ref{st-repeat} to~\ref{st-until} of the algorithm 
are repeated is more likely to approach $|S|$. 
While this time complexity may appear as excessive at first glance, 
we must consider that, for a social graph of order $n$, the algorithm 
will be run for sets of sybil nodes having at most cardinality $|S|=\log_2 n$. 
Thus, in terms of the order of the social graph, the worst-case running time 
will be $\mathcal{O}(n^2 \log_2 n)$. 

\subsection{Attacker subgraph retrieval}\label{sec-robust-retrieval} 

As discussed in Subsection~\ref{subsec-syb-retr-ideal}, 
in the original formulation of active attacks, the sybil retrieval phase 
is based on the assumption that the attacker subgraph can be uniquely 
and exactly matched to a subgraph of the released graph. 
This assumption is relaxed by the formulation of robust attacker subgraph 
retrieval given in Definition~\ref{def-ideal-sybil-subgraph-retrieval}, 
which accounts for the possibility that the attacker subgraph has been perturbed. 
The problem formulation given 
in Definition~\ref{def-ideal-sybil-subgraph-retrieval} requires 
a dissimilarity measure $\dist$ to compare candidate subgraphs 
to the original attacker subgraph. We will introduce such a measure 
in this section. Moreover, the problem formulation requires 
searching the entire power set of $V_{\transfg}$, which is infeasible 
in practice. In order to reduce the size of the search space, we will establish 
a perturbation threshold $\threshold$, and the search procedure will 
discard any candidate subgraph $X$ 
such that $\dist(\wis{X}{\transfg},\wis{S}{\plussybg})>\threshold$. 

We now define the dissimilarity measure $\dist$ that will be used. 
To that end, some additional notation will be necessary. 
For a graph $H$, a vertex set $V\subseteq V_H$, 
and a complete order $\prec\subseteq V\times V$, we will define 
the vector $\mathbf{v}_{\prec}=(v_{i_1}, v_{i_2}, \ldots, v_{i_{|V|}})$, 
as the one satisfying $v_{i_1} \prec v_{i_2} \prec \ldots \prec v_{i_{|V|}}$. 
When the order $\prec$ is fixed or clear from the context, we will simply 
refer to $\mathbf{v}_{\prec}$ as $\mathbf{v}$. 
Moreover, for the sake of simplicity in presentation, we will in some cases 
abuse notation and use $\mathbf{v}$ for $V$, 
$\wis{\mathbf{v}}{H}$ for $\wis{V}{H}$, and so on. 
The search procedure assumes the existence of a fixed order $\prec_S$ 
on the original set of sybil nodes $S$, 
which is established at the attacker subgraph creation stage, as discussed 
in Subsection~\ref{sec-ssgr-fp-creation}. 
In what follows, we will use the notation $\mathbf{s}=(x_1,x_2,\ldots,x_{|S|})$ 
for the vector~$\mathbf{s}_{\prec_S}$. 

Given the original attacker subgraph $\wis{S}{\plussybg}$ and a subgraph 
of $\transfg$ weakly induced by a candidate vector 
$\mathbf{v}=(v_1,v_2,\ldots,v_{|S|})$, the dissimilarity measure $\dist$ 
will compare $\wis{\mathbf{v}}{\transfg}$ to $\wis{S}{\plussybg}$ according 
to the following criteria: 
\begin{itemize}
\item The set of \emph{inter-sybil edges} of $\wis{S}{\plussybg}$ will be compared 
to that of $\wis{\mathbf{v}}{\transfg}$. This is equivalent to comparing 
$E(\indsg{S}{\plussybg})$ and $E(\indsg{\mathbf{v}}{\transfg})$. 
To that end, we will apply to $\indsg{S}{\plussybg}$ 
the isomorphism $\varphi'\colon\indsg{S}{\plussybg}\rightarrow 
\indsg{\mathbf{v}}{\transfg}$, which 
makes $\varphi'(x_i)=v_i$ for every $i\in\{1,\ldots,|S|\}$. 
The contribution of inter-sybil edges to $\dist$ will thus 
be defined as 
\begin{equation}\label{eq-wedist-inner}
\appredistsyb\left(\wis{S}{\plussybg},\wis{\mathbf{v}}{\transfg}\right)=
\left|
E(\varphi'\indsg{S}{\plussybg}) \symdif E(\indsg{\mathbf{v}}{\transfg})
\right|,
\end{equation} 
that is, the symmetric difference between the edge sets 
of $\varphi'\indsg{S}{\plussybg}$ and $\indsg{\mathbf{v}}{\transfg}$. 
\item The set of \emph{sybil-to-non-sybil edges} of $\wis{S}{\plussybg}$ 
will be compared to that of $\wis{\mathbf{v}}{\transfg}$. 
Unlike the previous case, where the orders $\prec_{S}$ and $\prec_{\mathbf{v}}$ 
allow to define a trivial isomorphism between the induced subgraphs, 
in this case creating the appropriate matching would be equivalent 
to solving the re-identification problem for every candidate subgraph, 
which is considerably inefficient. 
In consequence, we introduce a relaxed criterion, 
which is based on the numbers of non-sybil neighbours of every sybil node, 
which we refer to as \emph{marginal degrees}. The marginal degree 
of a sybil node $x\in S$ is thus defined 
as $\delta_{\wis{S}{\plussybg}}'(x)=\left|
N_{\wis{S}{\plussybg}}(x)\setminus S\right|$. By analogy, 
for a vertex $v\in\mathbf{v}$, 
we define $\delta_{\wis{\mathbf{v}}{\transfg}}'(v)=\left|
N_{\wis{\mathbf{v}}{\transfg}}(v)\setminus \mathbf{v}\right|$. 
Finally, the contribution of sybil-to-non-sybil edges to $\dist$ 
will be defined as 
\begin{equation}\label{eq-wedist-neigh}
\appredistneigh\left(\wis{S}{\plussybg},\wis{\mathbf{v}}{\transfg}\right)=
\displaystyle\sum_{i=1}^{|S|}\left|\delta_{\wis{\mathbf{v}}{\transfg}}'(v_i) - 
\delta_{\wis{S}{\plussybg}}'(x_i)\right| 
\end{equation}
\item The dissimilarity measure combines the previous criteria as follows: 
\begin{equation}\label{eq-wedist}
\appredist\left(\wis{S}{\plussybg},\wis{\mathbf{v}}{\transfg}\right)=
\appredistsyb\left(\wis{S}{\plussybg},\wis{\mathbf{v}}{\transfg}\right)+
\appredistneigh\left(\wis{S}{\plussybg},\wis{\mathbf{v}}{\transfg}\right)
\end{equation}
\end{itemize}

Figure~\ref{fig-ex-dissim-measure} shows an example of the computation 
of this dissimilarity measure, with $\mathbf{s}=(x_1,x_2,x_3,$ $x_4,x_5)$ 
and $\mathbf{v}=(v_1,v_2,v_3,v_4,v_5)$. In the figure, we can observe 
that $(x_1,x_3)\in E_{\wis{S}{\plussybg}}$ 
and $(v_1,v_3)\notin E_{\wis{\mathbf{v}}{\transfg}}$,  
whereas $(x_3,x_4)\in E_{\wis{S}{\plussybg}}$ 
and $(v_3,v_4)\notin E_{\wis{\mathbf{v}}{\transfg}}$. 
In consequence,\\ 
$\appredistsyb\left(\wis{S}{\plussybg},\wis{\mathbf{v}}{\transfg}\right)=2$. 
Moreover, we can also observe that $\delta_{\wis{S}{\plussybg}}'(x_2)=
|\emptyset|=0$, whereas $\delta_{\wis{\mathbf{v}}{\transfg}}'(v_2)=|
\{y'_1,y'_5\}|=2$. 
Since $\delta_{\wis{S}{\plussybg}}'(x_i)=
\delta_{\wis{\mathbf{v}}{\transfg}}'(v_i)$ for $i\in\{1,3,4,5\}$, 
we have 
$\appredistneigh\left(\wis{S}{\plussybg},\wis{\mathbf{v}}{\transfg}\right)=2$, 
so $\appredist\left(\wis{S}{\plussybg},\wis{\mathbf{v}}{\transfg}\right)=4$.  
It is simple to see that the value of the dissimilarity function 
is dependent on the order imposed by the vector $\mathbf{v}$. 
For example, consider the vector $\mathbf{v}'=(v_5,v_2,v_3,v_4,v_1)$. 
We can verify\footnote{We now have that 
$(x_1,x_2)\in E_{\wis{S}{\plussybg}}$ 
and $(v_5,v_2)\notin E_{\wis{\mathbf{v}'}{\transfg}}$; 
$(x_1,x_3)\in E_{\wis{S}{\plussybg}}$ 
and $(v_5,v_3)\notin E_{\wis{\mathbf{v}'}{\transfg}}$; 
$(x_2,x_5)\notin E_{\wis{S}{\plussybg}}$ 
and $(v_2,v_1)\in E_{\wis{\mathbf{v}'}{\transfg}}$; and 
$(x_3,x_4)\in E_{\wis{S}{\plussybg}}$ 
whereas $(v_3,v_4)\notin E_{\wis{\mathbf{v}'}{\transfg}}$. 
Moreover, now $\left|\delta_{\wis{\mathbf{v}'}{\transfg}}'(v_5) - 
\delta_{\wis{S}{\plussybg}}'(x_1)\right|=1$, 
$\left|\delta_{\wis{\mathbf{v}'}{\transfg}}'(v_2) - 
\delta_{\wis{S}{\plussybg}}'(x_2)\right|=2$ and  
$\left|\delta_{\wis{\mathbf{v}'}{\transfg}}'(v_1) - 
\delta_{\wis{S}{\plussybg}}'(x_5)\right|~=~1$.} 
that now 
$\appredistsyb\left(\wis{S}{\plussybg},\wis{\mathbf{v}'}{\transfg}\right)=4$, 
whereas 
$\appredistneigh\left(\wis{S}{\plussybg},\wis{\mathbf{v}'}{\transfg}\right)=4$, 
so the dissimilarity value is now 
$\appredist\left(\wis{S}{\plussybg},\wis{\mathbf{v}'}{\transfg}\right)=8$.

\begin{figure}[!ht]

\begin{center}

\begin{tikzpicture}[inner sep=0.7mm, place/.style={circle,draw=black,
fill=white},xx/.style={circle,draw=black!99, 
fill=black!99},gray1/.style={circle,draw=black!99, 
fill=black!25},gray2/.style={circle,draw=black!99, 
fill=black!50},gray3/.style={circle,draw=black!99,fill=black!75},
gray2d/.style={diamond,draw=black!99,fill=black!50},
transition/.style={rectangle,draw=black!50,fill=black!20,thick}, 
line width=.5pt]

\coordinate (x1) at (0,0);
\coordinate (x2) at (1,0);
\coordinate (x3) at (2,0);
\coordinate (x4) at (3,0);
\coordinate (x5) at (4,0);

\coordinate (y11) at (-0.8,0.8);
\coordinate (y12) at (1,1);
\coordinate (y13) at (3,1);
\coordinate (y14) at (2,-1);

\coordinate (v1) at (6,0);
\coordinate (v2) at (7,0);
\coordinate (v3) at (8,0);
\coordinate (v4) at (9,0);
\coordinate (v5) at (10,0);

\coordinate (y21) at (5.2,0.8);
\coordinate (y22) at (7,1);
\coordinate (y23) at (9,1);
\coordinate (y24) at (8,-1);
\coordinate (y25) at (7,-1);

\draw[black] (x1) -- (x5);
\draw[black] (x1) .. controls (1,-1) .. (x3);
\draw[black] (x1) .. controls (1.5,0.64) .. (x4);
\draw[black] (x2) .. controls (2,1) .. (x4);
\draw[black] (y11) -- (x1) -- (y12) -- (x3) -- (y13) -- (x5);
\draw[black] (x3) -- (y14);

\draw[black] (v1) -- (v3);
\draw[black] (v4) -- (v5);
\draw[black] (v1) .. controls (7.5,0.64) .. (v4);
\draw[black] (v2) .. controls (8,1) .. (v4);
\draw[black] (y21) -- (v1) -- (y22) -- (v3) -- (y23) -- (v5);
\draw[black] (v3) -- (y24);
\draw[black] (v2) -- (y21);
\draw[black] (v2) -- (y25);

\foreach \ind in {1,...,5} \node [gray1] at (x\ind) {};
\foreach \ind in {1,...,4} \node [place] at (y1\ind) {};
\foreach \ind in {1,...,5} \node [gray1] at (v\ind) {};
\foreach \ind in {1,...,5} \node [place] at (y2\ind) {};

\node at (-0.3,-0.3) {$x_1$};
\node at (1.3,-0.3) {$x_2$};
\node at (2.3,-0.3) {$x_3$};
\node at (3.3,-0.3) {$x_4$};
\node at (4.3,-0.3) {$x_5$};

\node at (-1,1) {$y_1$};
\node at (1,1.3) {$y_2$};
\node at (3,1.3) {$y_3$};
\node at (2,-1.3) {$y_4$};

\node at (6.3,-0.3) {$v_1$};
\node at (7.3,-0.3) {$v_2$};
\node at (8.3,-0.3) {$v_3$};
\node at (9.3,-0.3) {$v_4$};
\node at (10.3,-0.3) {$v_5$};

\node at (5,1) {$z_1$};
\node at (7,1.3) {$z_2$};
\node at (9,1.3) {$z_3$};
\node at (8,-1.3) {$z_4$};
\node at (7,-1.3) {$z_5$};

\coordinate [label=center:{\small{$\wis{S}{\plussybg}$}}] (al) at (2,-2);
\coordinate [label=center:{\small{$\wis{\mathbf{v}}{\transfg}$}}] (bl) at (8,-2);

\end{tikzpicture}

\end{center}

\caption{An example of possible graphs $\wis{S}{\plussybg}$ 
and $\wis{\mathbf{v}}{\transfg}$. 
Vertices in $S$ and $\mathbf{v}$ are coloured in gray.} 
\label{fig-ex-dissim-measure} 
\end{figure} 

The search procedure assumes 
that the transformation $t$ did not remove the image of any sybil node 
from $\iso\plussybg$, so it searches the set 
of cardinality-$|S|$ permutations of elements from~$V_{\transfg}$,  
respecting the tolerance threshold. The method is a breadth-first search, 
which analyses at the $i$-th level the possible matches to the vector 
$(x_1,x_2,\ldots,x_i)$ composed of the first $i$ components of $\mathbf{s}$. 
The tolerance threshold $\threshold$ is used to prune the search tree. 
A detailed description of the procedure is shown 
in Algorithm~\ref{alg-bdf-search}. 
Ideally, the algorithm outputs a unitary set 
$\tilde{\mathcal{C}}^*=\{(v_{j_1},v_{j_2},\ldots,v_{j_{|S|}})\}$, 
in which case the vector $\mathbf{v}=(v_{j_1},v_{j_2},\ldots,v_{j_{|S|}})$ 
is used as the input to the fingerprint matching phase, described in the following 
subsection. If this is not the case, and the algorithm yields  
$\tilde{\mathcal{C}}^*=\{\mathbf{v}_1,\mathbf{v}_2,\ldots,\mathbf{v}_t\}$, 
the attack randomly picks an element $\mathbf{v}_i\in\tilde{\mathcal{C}}^*$ 
and proceeds to the fingerprint matching phase. Finally, 
if $\tilde{\mathcal{C}}^*=\emptyset$, the attack is considered to fail, 
as no re-identification is possible. To conclude, we point out that 
if Algorithm~\ref{alg-bdf-search} is run 
with $\threshold=0$, then $\tilde{\mathcal{C}}^*$ contains 
exactly the same candidate set that would be recovered 
by the attacker subgraph retrieval phase of the original walk-based attack. 

\begin{algorithm} 
\caption{Given the graphs $\plussybg$ and $\transfg$, the set of original 
sybil nodes $S\subseteq V_{\plussybg}$, and the maximum distance threshold 
$\threshold$, obtain the set $\tilde{\mathcal{C}}^*$ 
of equally-likely best candidate sybil sets.} 
\label{alg-bdf-search} 
\begin{algorithmic}[1] 
\State $\triangleright$ Find suitable candidates to match $x_1$ 
\State $PartialCandidates_1\gets\emptyset$
\State $d\gets\threshold$
\For{$v\in V_{\transfg}$}
  \If{$\appredist\left(\wis{(x_1)}{\plussybg},\wis{(v)}{\transfg}\right)<d$}
    \State $PartialCandidates_1\gets\{(v)\}$
    \State $d\gets 
    \appredist\left(\wis{(x_1)}{\plussybg},\wis{(v)}{\transfg}\right)$
  \ElsIf{$\appredist\left(\wis{(x_1)}{\plussybg},\wis{(v)}{\transfg}\right)=d$}
    \State $PartialCandidates_1\gets PartialCandidates_1\cup\{(v)\}$
  \EndIf
\EndFor
\If{$PartialCandidates_1=\emptyset$}
  \State \Return $\emptyset$
\ElsIf{$|S|=1$}
  \State \Return $PartialCandidates_1$
\Else
  \State $\triangleright$ Find rest of matches for candidates
  \State \Return \Call{Breadth-First-Search}{$2$, $PartialCandidates_1$}
  \ \ \ \ \ \ $\triangleright$ Algorithm~\ref{alg-bdf-search-fn-bfs}
\EndIf

\end{algorithmic}

\end{algorithm} 

\begin{algorithm}
\caption{Function \textsc{Breadth-First-Search}, 
used in Algorithm~\ref{alg-bdf-search}.} 
\label{alg-bdf-search-fn-bfs} 

\begin{algorithmic}[1]
\Function{Breadth-First-Search}{$i$, $PartialCandidates_{i-1}$}
  \State $\triangleright$ Find suitable candidates to match $(x_1,x_2,\ldots,x_i)$  
  \State $\mathbf{s}'\gets(x_1,x_2,\ldots,x_i)$
  \State $PartialCandidates_i\gets\emptyset$
  \State $d\gets\threshold$ 
  \For{$(v_{j_1},v_{j_2},\ldots,v_{j_{i-1}})\in \mathcal{C}_{i-1}$} 
    \State $ExtendedCandidates\gets\emptyset$
    \State $d'\gets\threshold$
    \For{$w\in V_{\transfg}\setminus(v_{j_1},v_{j_2},\ldots,v_{j_{i-1}})$}
      \State $\mathbf{v}'\gets (v_{j_1},v_{j_2},\ldots,v_{j_{i-1}},w)$
      \If{$\appredist\left(\wis{\mathbf{s}'}{\plussybg},
      \wis{\mathbf{v}'}{\transfg}\right)<d'$}
        \State $ExtendedCandidates\gets\{\mathbf{v}'\}$
        \State $d'\gets \appredist\left(\wis{\mathbf{s}'}{\plussybg},
        \wis{\mathbf{v}'}{\transfg}\right)$
      \ElsIf{$\appredist\left(\wis{\mathbf{s}'}{\plussybg},
      \wis{\mathbf{v}'}{\transfg}\right)=d'$}
        \State $ExtendedCandidates\gets ExtendedCandidates\cup\{\mathbf{v}'\}$
      \EndIf
    \EndFor
    \If{$d'<d$}
      \State $PartialCandidates_i\gets ExtendedCandidates$
      \State $d\gets d'$
    \ElsIf{$d'=d$}  
      \State $PartialCandidates_i\gets PartialCandidates_i\cup ExtendedCandidates$
    \EndIf
  \EndFor
  \If{$PartialCandidates_i=\emptyset$}
    \State \Return $\emptyset$
  \ElsIf{$i=|S|$}
    \State \Return $PartialCandidates_i$
  \Else
    \State $\triangleright$ Find rest of matches for candidates
    \State \Return \Call{Breadth-First-Search}{$i+1$, $PartialCandidates_i$}\
  \EndIf
\EndFunction
\end{algorithmic} 

\end{algorithm}

\subsection{Fingerprint matching}\label{sec-robust-re-identification} 

Now, we describe the noise-tolerant fingerprint matching process. 
Let $Y=\{y_1,\ldots,y_m\}\subseteq V_{\plussybg}$ represent the set of 
victims. 
Let $S$ be the original set of sybil nodes and $\tilde{S}'\subseteq V_{\transfg}$ 
a candidate obtained by the sybil retrieval procedure described above. 
As in the previous subsection, let $\mathbf{s}=(x_1,x_2,\ldots,x_{|S|})$ 
be the vector containing the elements of $S$ in the order imposed at the sybil 
subgraph creation stage. Moreover, let $\wis{\mathbf{v}}{\transfg}$, 
with $\mathbf{v}=(v_1,v_2,\ldots,v_{|S|})\in\tilde{\mathcal{C}}^*$, 
be a candidate sybil subgraph, retrieved using Algorithm~\ref{alg-bdf-search}.  
Finally, for every $i\in\{1,\ldots,m\}$, let $F_i\subseteq S$ be the original 
fingerprint of the victim $y_i$ and $\phi F_i\subseteq \mathbf{v}$ its image 
by the isomorphism mapping $\mathbf{s}$ to $\mathbf{v}$. 

We now describe the process for finding 
$Y'=\{y'_1,\ldots,y'_m\}\subseteq V_{\transfg}$, 
where $y'_i=\iso(y_i)$, using $\phi F_1, \phi F_2, \ldots, \phi F_m$, 
$\mathbf{s}$ and $\mathbf{v}$. 
If the perturbation $\transf{\plussybg}$ had caused no damage to the fingerprints, 
checking for the exact matches is sufficient. Since, as previously discussed, 
this is usually not the case, we will introduce a noise-tolerant 
fingerprint matching strategy that maps every original 
fingerprint to its most similar candidate fingerprint, within some tolerance 
threshold $\thresholdfp$. 

Algorithm~\ref{alg-fp-matching} describes the process for finding 
the set of optimal re-identifications. 
For a candidate victim $z\in N_{\transfg}(\mathbf{v})\setminus\mathbf{v}$, 
the algorithm denotes as $\tilde{F}_{z,\mathbf{v}}=N_{\transfg}(z)\cap \mathbf{v}$ 
its fingerprint with respect to $\mathbf{v}$. The algorithm is a depth-first 
search procedure. First, the algorithm finds, 
for every $\phi F_i$, $i\in\{1,\ldots,m\}$, 
the set of most similar candidate fingerprints, and keeps the set of matches 
that reach the minimum distance. From these best matches, one or several 
partial re-identifications are obtained. 
The reason why more than one partial re-identification is obtained 
is that more than one candidate fingerprint may be equally similar 
to some $\phi F_i$. For every partial re-identification, 
the method recursively finds the set of best completions 
and combines them to construct the final set of equally likely 
re-identifications. The search space is reduced 
by discarding insufficiently similar matches. For any candidate victim $z$ 
and any original victim $y_i$ such that 
$d(\tilde{F}_{z,\mathbf{v}},\phi F_i)<\thresholdfp$, 
the algorithm discards all matchings where $\iso(y_i)=z$. 

To illustrate how the method works, recall the graphs $\wis{S}{\plussybg}$ 
and $\wis{\mathbf{v}}{\transfg}$ depicted in Figure~\ref{fig-ex-dissim-measure}. 
The original set of victims is $Y=\{y_1,y_2,y_3,y_4\}$ and their fingerprints 
are $F_1=\{x_1\}$, $F_2=\{x_1,x_3\}$, $F_3=\{x_3,x_5\}$, and $F_4=\{x_3\}$, 
respectively. In consequence, we have $\phi F_1=\{v_1\}$, $\phi F_2=\{v_1,v_3\}$, 
$\phi F_3=\{v_3,v_5\}$, and $\phi F_4=\{v_3\}$. The set of candidate victims 
is $N_{\transfg}(\mathbf{v})\setminus\mathbf{v}=\{z_1,z_2,z_3,z_4,z_5\}$. 
The method will first find all exact matchings, that is $\varphi(y_2)=z_2$, 
$\varphi(y_3)=z_3$, and $\varphi(y_4)=z_4$, because the distances between 
the corresponding fingerprints is zero in all three cases. 
Since none of these matchings 
is ambiguous, the method next determines the match $\varphi(y_1)=z_1$, 
because $d(\tilde{F}_{z_1},\phi F_1)=d(\{v_1,v_2\},\{v_1\})=1
<2=d(\{v_2\},\{v_1\})=d(\tilde{F}_{z_5},\phi F_1)$. 
At this point, the method stops and yields the unique re-identification 
$\{(y_1,z_1),(y_2,z_2),(y_3,z_3),(y_4,z_4)\}$. 
Now, suppose that the vertex $z_5$ is linked in $\transfg$ to $v_3$, 
instead of $v_2$, as depicted in Figure~\ref{fig-ex-ambigous-re-idef}. 
In this case, the method will unambiguously determine the matchings 
$\varphi(y_2)=z_2$ and $\varphi(y_3)=z_3$, and then will try the two 
choices $\varphi(y_4)=z_4$ and $\varphi(y_4)=z_5$. In the first case, 
the method will make $\varphi(y_1)=z_1$ and discard $z_5$. 
Analogously, in the second case the method will also make $\varphi(y_1)=z_1$,  
and will discard $z_4$. 
Thus, the final result will consist in two equally likely re-identifications, 
namely $\{(y_1,z_1),(y_2,z_2),(y_3,z_3),(y_4,z_4)\}$ and 
$\{(y_1,z_1),(y_2,z_2),(y_3,z_3),(y_4,z_5)\}$.  

\begin{figure}[!ht]

\begin{center}

\begin{tikzpicture}[inner sep=0.7mm, place/.style={circle,draw=black,
fill=white},xx/.style={circle,draw=black!99, 
fill=black!99},gray1/.style={circle,draw=black!99, 
fill=black!25},gray2/.style={circle,draw=black!99, 
fill=black!50},gray3/.style={circle,draw=black!99,fill=black!75},
gray2d/.style={diamond,draw=black!99,fill=black!50},
transition/.style={rectangle,draw=black!50,fill=black!20,thick}, 
line width=.5pt]

\coordinate (x1) at (0,0);
\coordinate (x2) at (1,0);
\coordinate (x3) at (2,0);
\coordinate (x4) at (3,0);
\coordinate (x5) at (4,0);

\coordinate (y11) at (-0.8,0.8);
\coordinate (y12) at (1,1);
\coordinate (y13) at (3,1);
\coordinate (y14) at (2,-1);

\coordinate (v1) at (6,0);
\coordinate (v2) at (7,0);
\coordinate (v3) at (8,0);
\coordinate (v4) at (9,0);
\coordinate (v5) at (10,0);

\coordinate (y21) at (5.2,0.8);
\coordinate (y22) at (7,1);
\coordinate (y23) at (9,1);
\coordinate (y24) at (8.5,-1);
\coordinate (y25) at (7.5,-1);

\draw[black] (x1) -- (x5);
\draw[black] (x1) .. controls (1,-1) .. (x3);
\draw[black] (x1) .. controls (1.5,0.64) .. (x4);
\draw[black] (x2) .. controls (2,1) .. (x4);
\draw[black] (y11) -- (x1) -- (y12) -- (x3) -- (y13) -- (x5);
\draw[black] (x3) -- (y14);

\draw[black] (v1) -- (v3);
\draw[black] (v4) -- (v5);
\draw[black] (v1) .. controls (7.5,0.64) .. (v4);
\draw[black] (v2) .. controls (8,1) .. (v4);
\draw[black] (y21) -- (v1) -- (y22) -- (v3) -- (y23) -- (v5);
\draw[black] (v3) -- (y24);
\draw[black] (v2) -- (y21);
\draw[black] (v3) -- (y25);

\foreach \ind in {1,...,5} \node [gray1] at (x\ind) {};
\foreach \ind in {1,...,4} \node [place] at (y1\ind) {};
\foreach \ind in {1,...,5} \node [gray1] at (v\ind) {};
\foreach \ind in {1,...,5} \node [place] at (y2\ind) {};

\node at (-0.3,-0.3) {$x_1$};
\node at (1.3,-0.3) {$x_2$};
\node at (2.3,-0.3) {$x_3$};
\node at (3.3,-0.3) {$x_4$};
\node at (4.3,-0.3) {$x_5$};

\node at (-1,1) {$y_1$};
\node at (1,1.3) {$y_2$};
\node at (3,1.3) {$y_3$};
\node at (2,-1.3) {$y_4$};

\node at (6.3,-0.3) {$v_1$};
\node at (7.3,-0.3) {$v_2$};
\node at (8.4,-0.1) {$v_3$};
\node at (9.3,-0.3) {$v_4$};
\node at (10.3,-0.3) {$v_5$};

\node at (5,1) {$z_1$};
\node at (7,1.3) {$z_2$};
\node at (9,1.3) {$z_3$};
\node at (8.5,-1.3) {$z_4$};
\node at (7.5,-1.3) {$z_5$};

\coordinate [label=center:{\small{$\wis{S}{\plussybg}$}}] (al) at (2,-2);
\coordinate [label=center:{\small{$\wis{\mathbf{v}}{\transfg}$}}] (bl) at (8,-2);

\end{tikzpicture}

\end{center}

\caption{Alternative example of possible graphs $\wis{S}{\plussybg}$ 
and $\wis{\mathbf{v}}{\transfg}$.} 
\label{fig-ex-ambigous-re-idef} 
\end{figure} 

\begin{algorithm} 
\caption{Given the graphs $\plussybg$ and $\transfg$, the original 
set of victims $Y=\{y_1,y_2,\ldots,y_m\}$, the original fingerprints 
$F_1, F_2, \ldots, F_m$, a candidate set of sybils $\mathbf{v}$, 
and the maximum distance threshold 
$\thresholdfp$, obtain the set $ReIdents$ of best matchings.} 
\label{alg-fp-matching} 
\begin{algorithmic}[1] 
\Statex
\State $(ReIdents, d)\gets $ 
\Call{GreedyMatching}{$Y$, $N_{\transfg}(\mathbf{v})\setminus\mathbf{v}$}
\State \Return $ReIdents$
\Statex\hrulefill
\end{algorithmic}

\begin{algorithmic}[1]
\Function{GreedyMatching}{$Y$, $M$}
\State $\triangleright$ Find best matches of some unmapped victim(s) 
to one or more candidate victims
\State $MapPartialBest\gets\emptyset$
\State $d\gets \beta$
\For{$y_i\in Y$}
\For{$z\in M$}
  \If{$d(\tilde{F}_{z,\mathbf{v}},\phi F_i)<d$}
    \State $MapPartialBest\gets\{(y_i,\{z\})\}$
    \State $d\gets d(\tilde{F}_{z,\mathbf{v}},\phi F_i)$
  \ElsIf{$d(\tilde{F}_{z,\mathbf{v}},\phi F_i)=d$}
    \If{$(y_i,P)\in MapPartialBest$}
      \State $MapPartialBest\gets(MapPartialBest\setminus 
      (y_i,P))\cup\{(y_i,P\cup\{z\})\}$ 
    \Else
      \State $MapPartialBest\gets MapPartialBest\cup\{(y_i,P\cup\{z\})\}$
    \EndIf
  \EndIf
\EndFor
\EndFor
\If{$MapPartialBest=\emptyset$}
  \State \Return $(\emptyset, +\infty)$
\Else
  \State $\triangleright$ Build partial re-identifications from best matches 
  \State Pick any $(y,P)\in MapPartialBest$
  \State $PartialReIdents\gets \{(y,z)\;|\;z\in P\}$
  \For{$(y',P')\in MapPartialBest\setminus (y,P)$}
    \State $PartialReIdents\gets\{R\cup \{y',z'\}
    \;|\;R\in PartialReIdents\wedge z'\in P'\}$
  \EndFor
  \If{$|Y|=1$}
    \State \Return $(PartialReIdents,d)$
  \Else
    \State $\triangleright$ Recursively find best completions for partial re-identifications  
    \State $BestComplReIdent \gets\emptyset$
    \State $d_{best}\gets \beta$
    \For{$R\in PartialReIdents$}
      \State $(CompletedReIdents, d)\gets $ 
      \Call{GreedyMatching}{$Y\setminus \{y\;|\;(y,z)\in R\}$, 
      $M\setminus \{z\;|\;(y,z)\in R\}$}
      \If{$d<d_{best}$}
        \State $BestComplReIdent\gets CompletedReIdents$
        \State $d_{best}\gets d$
      \ElsIf{$d=d_{best}$}
        \State $BestComplReIdent\gets BestComplReIdent\cup CompletedReIdents$
      \EndIf
    \EndFor
    \State \Return $(\{P\cup R\;|\; P\in PartialReIdents\wedge 
    R\in BestComplReIdent\}, d_{best})$
  \EndIf  
\EndIf
\EndFunction
\end{algorithmic}
\end{algorithm} 

Ideally, Algorithms~\ref{alg-bdf-search} 
and~\ref{alg-fp-matching} both yield unique solutions, in which case 
the sole element in the output of Algorithm~\ref{alg-fp-matching} 
is given as the final re-identification. If this is not the case, 
the attack picks a random candidate sybil subgraph from $\tilde{\mathcal{C}}^*$, 
uses it as the input of Algorithm~\ref{alg-fp-matching}, and picks a random 
re-identification from its output. 
If either algorithm yields an empty solution, the attack fails.  
Finally, it is important to note that, if Algorithm~\ref{alg-bdf-search} 
is run with $\threshold=0$ and Algorithm~\ref{alg-fp-matching} is run 
with $\thresholdfp=0$, then the final result is exactly the same 
set of equally likely matchings that would be obtained 
by the original walk-based attack. 

\section{Experiments}\label{sec-experiments} 

The purpose of our experiments is to show the considerable gain in resiliency 
of the proposed robust active attack, in comparison to the original attack. 
We used for our experiments a collection of randomly generated synthetic graphs. 
This collection is composed of $10,000$ graphs for each density value 
in the set $\{0.05, 0.1, \ldots, 1\}$. Each graph has $200$ 
vertices\footnote{We chose to use graphs of order $200$ to make the results 
comparable to those reported in~\cite{MRT2018} for the three anonymisation 
methods used here.}, and its edge set is randomly generated in such a manner 
that the desired density value is met. 

As discussed in~\cite{BDK2007}, for a graph having $n$ vertices, it suffices 
to insert $\log n$ sybil nodes for being able to compromise any possible 
victim, while it has been shown in~\cite{Yu2006,Yu2008} that even the 
so-called \emph{near-optimal} sybil defences do not aim to remove every 
sybil node, but to limit their number to around $\log n$.  
In light of these two considerations, when evaluating each attack strategy 
on the collection of randomly generated graphs, 
we use $8$ sybil nodes. Given the set $S$ of sybil nodes, the original attack 
randomly creates any set of fingerprints 
from $\powerset{S}\setminus \emptyset$. 
In the case of the robust attack, 
the pool of uniformly distributed fingerprints 
was generated in advance and, if it is larger than the number of victims, 
different sets of fingerprints of size $|Y|$ are randomly drawn 
from the pool at every particular run. Moreover, in Algorithms~\ref{alg-bdf-search} 
and~\ref{alg-fp-matching}, we made $\threshold=\thresholdfp=8$. 

For every graph in the collection, after simulating the attacker subgraph creation 
stage of each attack, and the pseudonymisation performed by the defender, 
we generate six variants of perturbed graphs using the following methods: 
\begin{enumerate}[(a)] 
\item The method proposed in~\cite{MauwTrujilloXuan2016} 
for enforcing $(k,\ell)$-anonymity for some $k>1$ or some $\ell>1$. 
\item The method proposed in~\cite{MRT2018} for enforcing 
$(2,\Gamma_{G,1})$-anonymity. 
\item The method proposed in~\cite{MRT2018} for enforcing 
$(k,\Gamma_{G,1})$-adjacency anonymity for a given value of $k$. 
Here, we will run the method with $k=|S|$, since it was empirically shown  
in~\cite{MRT2018} that the original walk-based attack is very likely 
to be thwarted in this case. 
\item Randomly flipping $1\%$ of the edges in $\plussybg$. Each flip 
consists in randomly selecting a pair of vertices $u,v\in V_{\plussybg}$, 
removing the edge $(u,v)$ if it belongs to $E_{\plussybg}$, or adding it 
in the opposite case. Since $\plussybg$ has order $n=208$, 
this perturbation 
performs $\left\lfloor 0.01\cdot \frac{n(n-1)}{2}\right\rfloor=215$ flips. 
\item Randomly flipping $5\%$ of the edges in $\plussybg$ (that is, $1076$ flips), 
in a manner analogous to the one used above. 
\item Randomly flipping $10\%$ of the edges in $\plussybg$ (that is, $2153$ flips), 
in a manner analogous to the one used above. 
\end{enumerate} 

Finally, we compute the probability of success of the re-identification 
stage for each perturbed variant. The success probability is computed as 

\begin{equation}\label{eq-succ-prob-robust}
\textstyle\Pr=\left\{\begin{array}{ll}
\frac{\sum_{X\in \mathcal{X}} 
p_{X}}{|\mathcal{X}|}
&\quad\text{if }\mathcal{X}\neq\emptyset\\
0&\quad\text{otherwise}
\end{array}
\right. 
\end{equation}

\noindent
where $\mathcal{X}$ is the set of equally-likely possible sybil subgraphs 
retrieved in $\transfg$ by the third phase of the attack\footnote{Note that, 
for the original walk-based attack, 
$\mathcal{X}$ is the set of subgraphs of $\transfg$ isomorphic 
to $\wis{S}{\plussybg}$, whereas for the robust attack we have  
$\mathcal{X}=\left\{\wis{\mathbf{v}}{\transfg}\;|\;
\mathbf{v}\in\tilde{\mathcal{C}}^*\right\}$, being $\tilde{\mathcal{C}}^*$ 
the output of Algorithm~\ref{alg-bdf-search}.}, and 

\begin{displaymath}
p_{X} =\left\{
\begin{array}{ll}
\frac{1}{|\mathcal{Y}_{X}|} \quad & \quad \text{if} 
\quad Y \in \mathcal{Y}_{X}\\
                  0 \quad & \quad \text{otherwise}
\end{array}
\right.
\end{displaymath}

\noindent
with $\mathcal{Y}_{X}$ containing all equally-likely 
matchings\footnote{Note that, for the original walk-based attack, 
$\mathcal{Y}_X=\{\{y_1,y_2\ldots,y_m\}\subseteq V_{\transfg}\;|\; 
\forall_{i\in\{1,\ldots,m\}} F_{y_i,X}=F_i\}$, whereas for the robust attack 
$\mathcal{Y}_X$ is the output of Algorithm~\ref{alg-fp-matching}.} 
according to $X$. 
These experiments\footnote{The implementations of the graph generators, 
anonymisation methods and attack simulations are available 
at \url{https://github.com/rolandotr/graph}.} 
were performed on the HPC platform 
of the University of Luxembourg~\cite{VBCG_HPCS14}. 

Figure~\ref{fig-succ-prob-rand-graphs-200-verts-8-syb} shows 
the success probabilities of the two attacks on the perturbed graphs obtained 
by applying the strategies (a) to (f). From the analysis of the figure, 
it is clear that 
the robust attack has a consistently larger probability of success 
than the original walk-based attack. 
A relevant fact evidenced by the figure (items (a), (b) and (c)) 
is that the robust attack 
is completely effective against the anonymisation methods described 
in~\cite{MauwTrujilloXuan2016,MRT2018}. As pointed out by the authors 
of these papers, the fact that the original walk-based attack leveraging 
more than one sybil node was thwarted to a considerable extent 
by the anonymisation methods was a side effect of the disruptions caused 
in the graph, rather than the enforced privacy measure itself. 
By successfully addressing this shortcoming, the robust active attack 
lends itself as a more appropriate benchmark on which to evaluate 
future anonymisation methods. For example, by comparing the charts 
in Figure~\ref{fig-succ-prob-rand-graphs-200-verts-8-syb} (a) and (b),  
we can see that the success probability of the original attack,
for some low density values,
is slightly higher for the second anonymisation method than for the first. 
Neither of these algorithms gives a theoretical privacy guarantee against 
an attacker leveraging seven sybil nodes. However, a difference 
in success probability is observed for the original attack, 
which is due to the fact that the second method introduces a smaller number 
of changes in the graph than the first one \cite{MRT2018}. 
As can be observed in the figure, the robust attack is not 
affected by this difference between the methods, and allows the analyst 
to reach the correct conclusion that both methods fail to thwart the attack.  
As a final observation, by analysing items 
(d) and (e) in Figure~\ref{fig-succ-prob-rand-graphs-200-verts-8-syb}, 
we can see that even $1\%$ of random noise completely thwarts the original 
attack, whereas the robust attack still performs at around $0.6$ success 
probability. The robust attack also performs acceptably well 
with a $5\%$ random perturbation. 

\begin{figure}[!ht]
\centering
\subfigure[t][$(k,\ell)$-anonymity, $k>1$ or $\ell>1$ \cite{MauwTrujilloXuan2016}]{
\includegraphics[scale=.57]{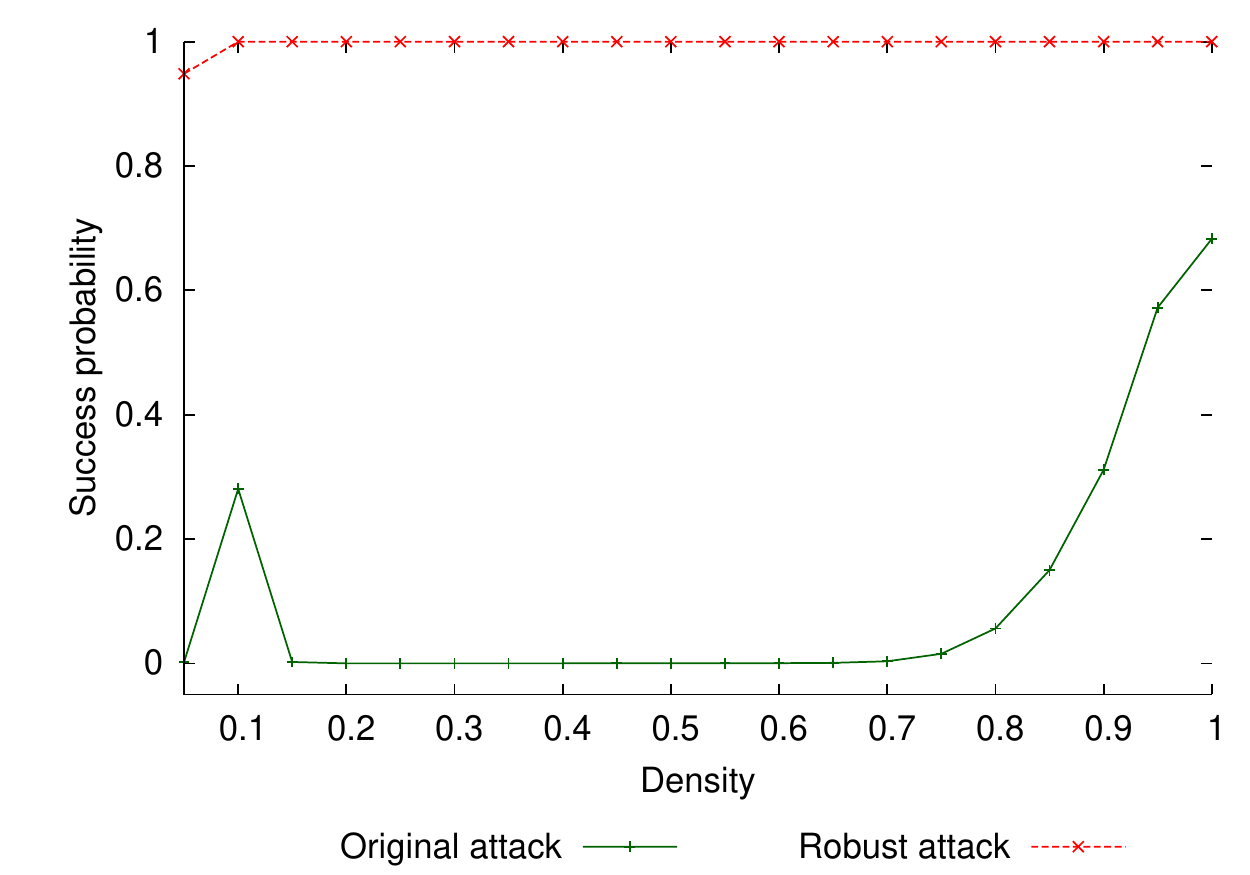}
}\hspace{-0.2cm}
\subfigure[t][$(2,\Gamma_1)$-anonymity \cite{MRT2018}]{
\includegraphics[scale=.57]{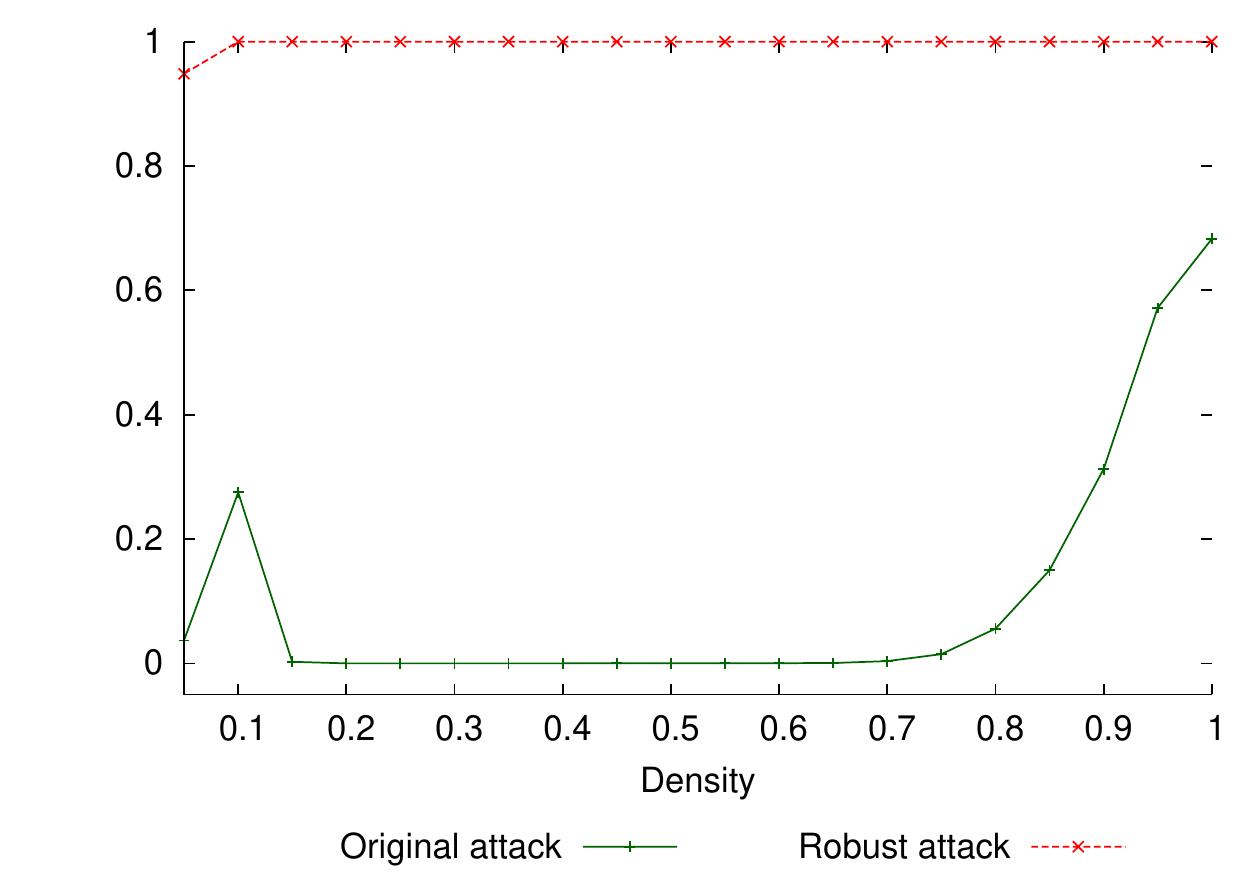}
}
\subfigure[t][$(|S|,\Gamma_1)$-adjacency anonymity \cite{MRT2018}]{
\includegraphics[scale=.57]{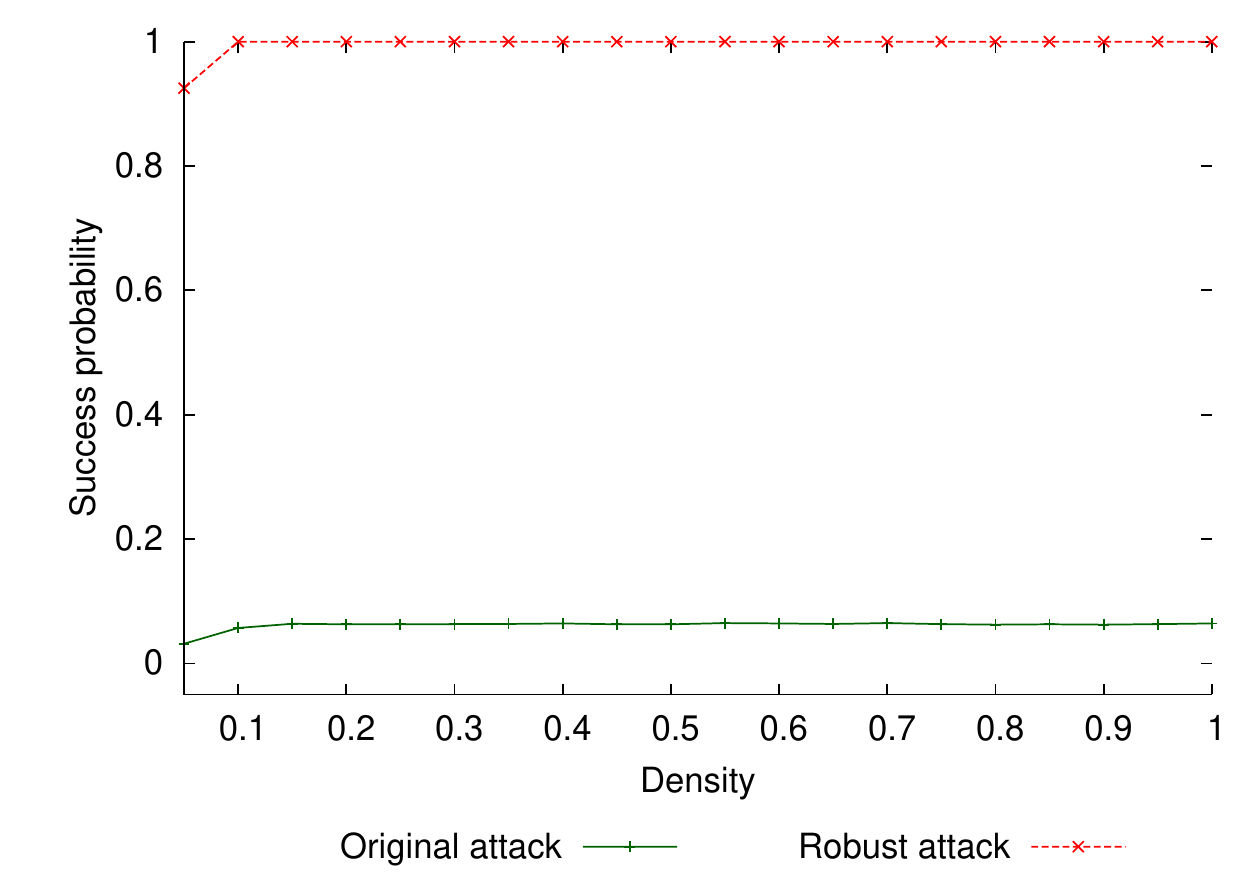}
}\hspace{-0.2cm}
\subfigure[t][$1\%$ random perturbation]{
\includegraphics[scale=.57]{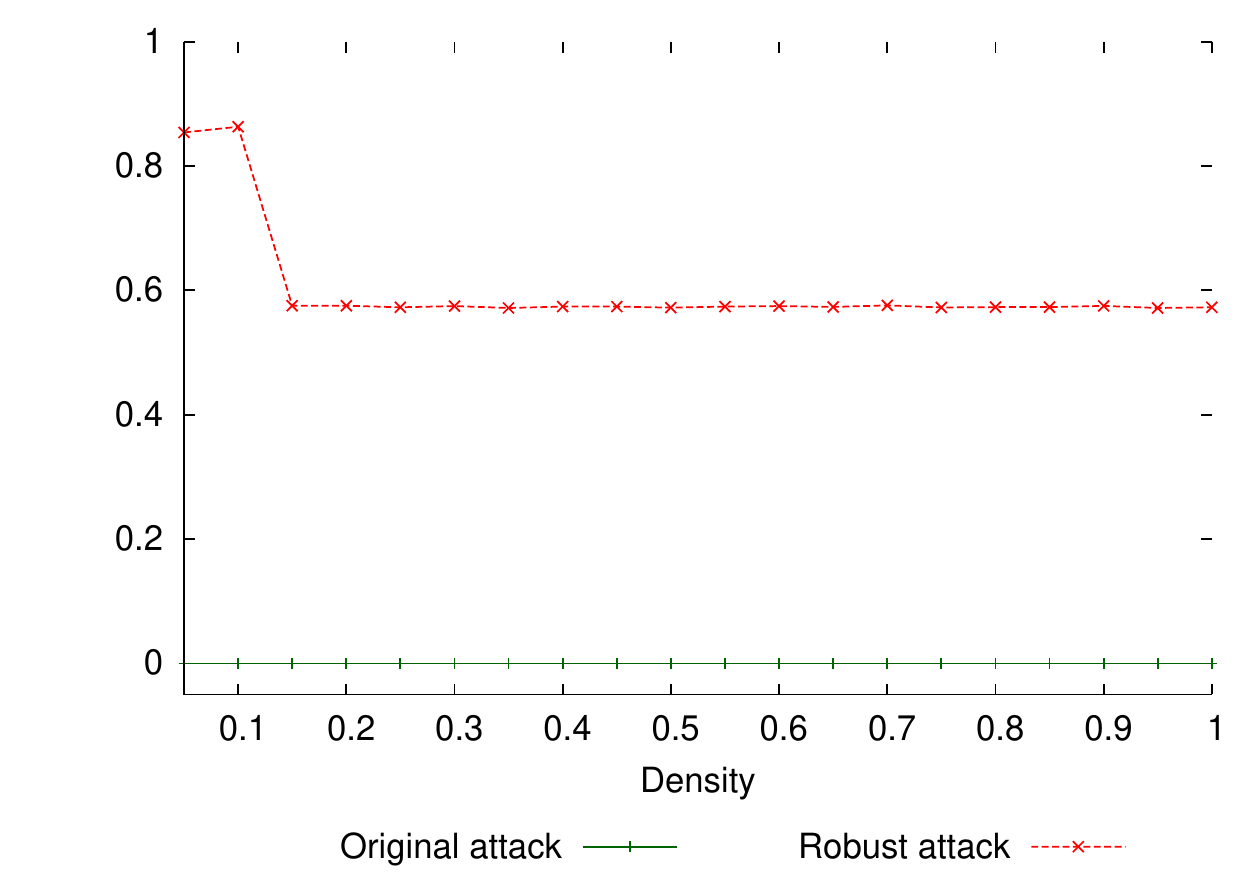}
}
\subfigure[t][$5\%$ random perturbation]{
\includegraphics[scale=.57]{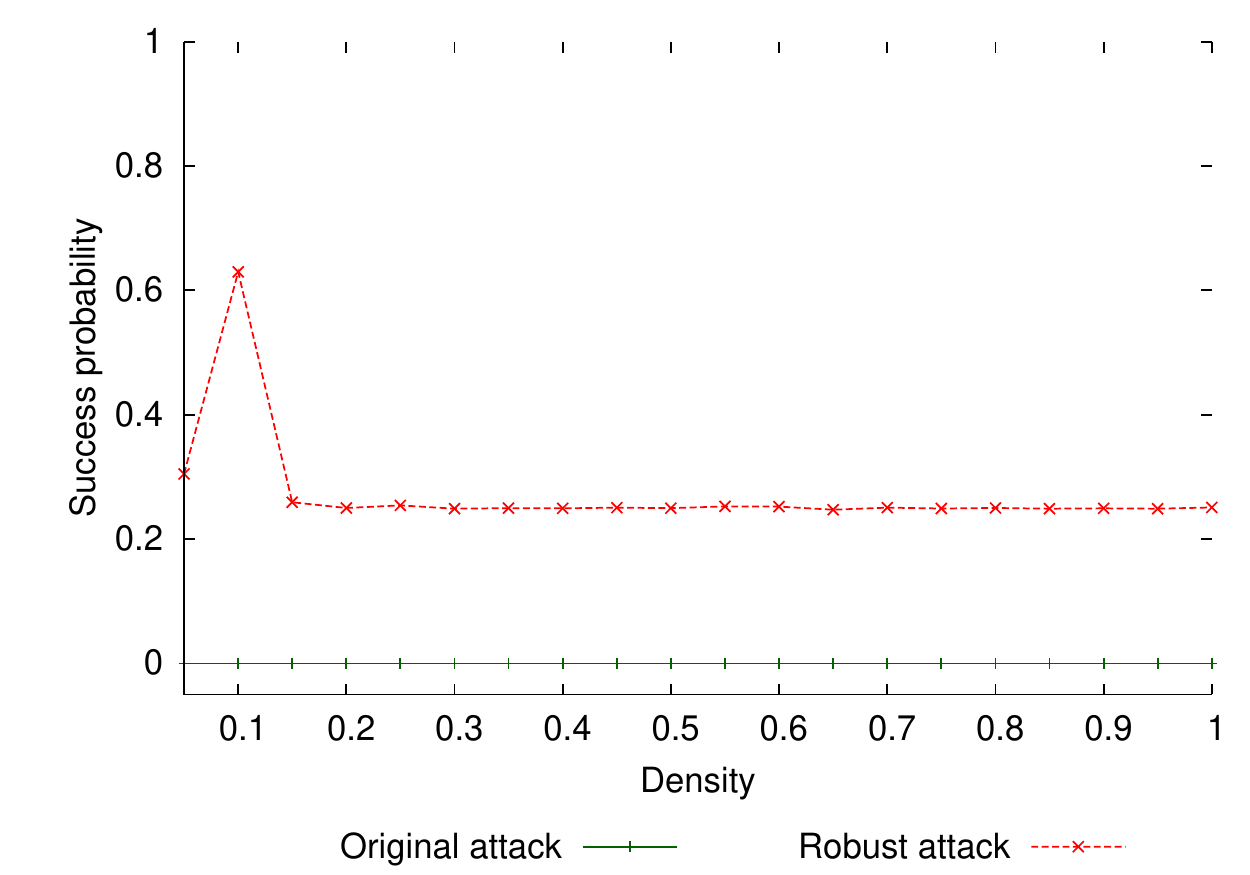}
}\hspace{-0.2cm}
\subfigure[t][$10\%$ random perturbation]{
\includegraphics[scale=.57]{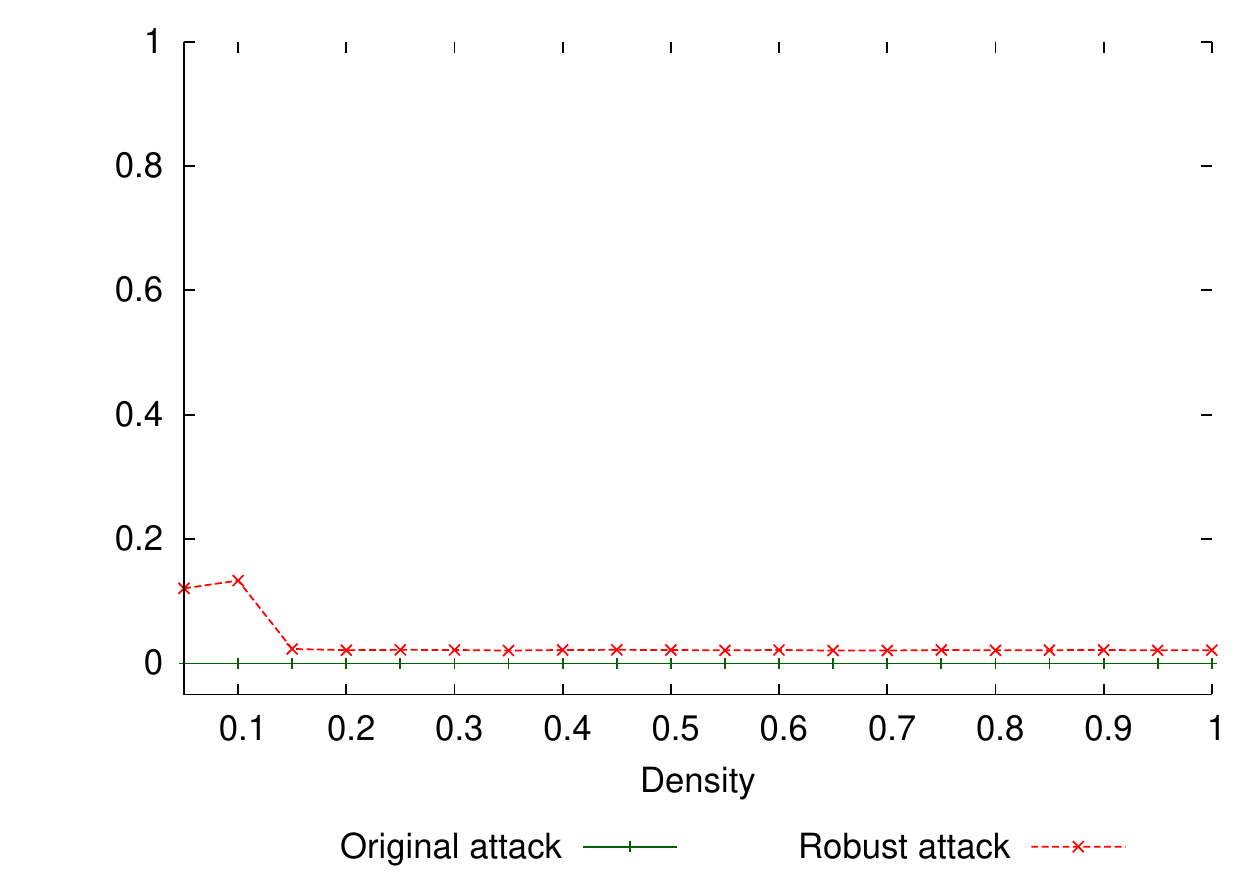}
}
\caption{Success probabilities of every attack variant on the collection 
of randomly generated graphs, after publishing the graphs perturbed 
by the methods listed above.} 
\label{fig-succ-prob-rand-graphs-200-verts-8-syb} 
\end{figure} 

\section{Related work} 
\label{sec-related-work} 

Privacy attacks on social networks exploit structural knowledge about the victims 
for re-identifying them in a released version of the social graph. These attacks 
can be divided in two categories, according to the manner in which the adversary 
obtains the knowledge used to re-identify the victims. On the one hand, 
\emph{passive attacks} rely on existing knowledge, which can be collected 
from publicly available sources, such as the public view of another social 
network where the victims are known to have accounts. The use of this type 
of information was demonstrated in~\cite{NS2009}, 
where information from Flickr was used to re-identify users in a pseudonymised 
Twitter graph. 

On the other hand, \emph{active attacks} rely on the ability to alter 
the structure of the social graph, in such a way that 
the unique structural properties 
allowing to re-identify the victims after publication are guaranteed 
to hold, and to be known by the adversary.   
As we discussed previously, the active attack methodology was introduced 
by Backstrom et al. in~\cite{BDK2007}. They proposed to use sybil nodes 
to create re-identifiable patterns for the victims, in the form of fingerprints 
defined by sybil-to-victim edges. Under this strategy, they proposed two attacks, 
the walk-based attack and the cut-based attack. The difference between both 
attacks lies in the structure given to the sybil subgraph for facilitating 
its retrieval after publication. In the walk-based attack, a long path linking 
all the sybil nodes in a predefined order is created, with remaining inter-sybil 
edges randomly generated. In the cut-based attack, a subset of the sybil nodes 
are guaranteed to be the only cut vertices linking the sybil subgraph 
and the rest of the graph. Interestingly, Backstrom et al. also study 
a passive version of these attacks, where fingerprints are used as identifying 
information, but no sybil nodes are inserted. Instead, they model the situation 
where legitimate users turn rogue and collude to share 
their neighbourhood information in order to retrieve their own weakly induced  
subgraph and re-identify some of their remaining neighbours. 
However, the final conclusion of this study is that the active attack 
is more capable because sybil nodes can better guarantee to create 
a uniquely retrievable subgraph and unique fingerprints. Additionally, a hybrid 
strategy was proposed by Peng et al. \cite{Peng2012,PLZW2014}. 
This hybrid attack is composed of two stages. First, a small-scale active attack 
is used to re-identify an initial set of victims, and then a passive attack 
is used to iteratively enlarge the set of re-identified victims with neighbours 
of previously re-identified victims. Because of the order in which the active 
and the passive phases are executed, the success of the initial active attack 
is critical to the entire attack. 

A large number of anonymisation methods have been proposed for privacy-preserving 
publication of social graphs, which can be divided into three categories: 
those that produce a perturbed version of the original graph 
\cite{LT2008,ZP2008,ZCO2009,Cheng2010,Lu2012,UMGA,Chester2013,Wang2014,Ma2015,Salas2015,Rousseau2017,Casas-Roma2017}, 
those that generate a new synthetic graph sharing some statistical properties 
with the original graph \cite{HMJTW2008,Mittal2013,LinkMirage}, 
and those that output some aggregate statistic of the graph 
without releasing the graph itself, e.g. differentially private degree 
correlation statistics \cite{Sala2011}. 
Active attacks, both the original formulation and the robust version presented 
in this paper, are relevant to the first type of releases. In this context, 
a number of methods have been proposed aiming 
to transform the graph into a new one satisfying some anonymity property based 
on the notion of $k$-anonymity \cite{S2001,Sweeney2002}. Examples of this type 
of anonymity properties for passive attacks are $k$-degree anonymity \cite{LT2008}, 
$k$-neighbourhood anonymity \cite{ZP2008} and $k$-automorphism \cite{ZCO2009}. 
For the case of active attacks, the notions of $(k,\ell)$-anonymity 
was introduced by Trujillo and Yero in~\cite{TY2016}. A $(k,\ell)$-anonymous 
graph guarantees that an active attacker with the ability to insert 
up to $\ell$ sybil nodes in the network will still be unable to distinguish 
any user from at least other $k-1$ users, in terms of their distances 
to the sybil nodes. Several relaxations of the notion of $(k,\ell)$-anonymity 
were introduced in in~\cite{MRT2018}. The notion of $(k,\ell)$-adjacency anonymity 
accounts for the unlikelihood of the adversary to know all distances 
in the original graph, whereas $(k,\Gamma_{G,\ell})$-anonymity models 
the protection of the victims only from vertex subsets with a sufficiently high 
re-identification probability and $(k,\Gamma_{G,\ell})$-adjacency anonymity 
combines both criteria. 

Anonymisation methods based on the notions of $(k,\ell)$-anonymity, 
$(k,\Gamma_{G,\ell})$-anonymity and $(k,\Gamma_{G,\ell})$-adjacency anonymity 
were introduced in~\cite{MauwTrujilloXuan2016,MauwRamirezTrujillo2016,MRT2018}. 
As we discussed above, despite the fact that these methods only give 
a theoretical privacy guarantee 
against adversaries with the capability of introducing a small number 
of sybil nodes, empirical results show that they are in fact capable 
of thwarting attacks leveraging larger numbers of sybil nodes.   
These results are in line with the observation that random perturbations also 
thwart active attacks in their original formulation \cite{NS2009,Ji2015}. 
This is a result of the fact that, originally, active attacks attempt 
to exactly retrieve the sybil subgraph and match the fingerprints. 

In the context of obfuscation methods, which aim to publish a new version 
of the social graph with randomly added perturbations, Xue et al.~\cite{XueKRKP12} 
assess the possibility of the attacker leveraging the knowledge about 
the noise generation to launch what they call a \emph{probabilistic} attack. 
In their work, Xue et al. provided accurate estimators for several 
graph parameters in the noisy graphs, to support the claim that 
useful computations can be conducted on the graphs after adding noise. 
Among these estimators, they included one for the degree sequence of the graph. 
Then, noting that an active attacker can indeed profit from this estimator 
to strengthen the walk-based attack, they show that after increasing 
the perturbation by a sufficiently small amount this attack also fails. 
Although the probabilistic attack presented in \cite{XueKRKP12} 
features some limited level of noise resilience, it is not usable 
as a general strategy, because it requires the noise to follow 
a specific distribution and the parameters of this distribution 
to be known by the adversary. 
Our definition of robust attack makes no assumptions about the type 
of perturbation applied to the graph.   

Finally, we point out that the active attack strategy shares some similarities  
with graph watermarking methods, e.g. \cite{Collberg2003,Zhao2015,Eppstein2016}. 
The purpose of graph watermarking is to release a graph containing 
embedded instances of a small subgraph, the \emph{watermark}, 
that can be easily retrieved 
by the graph publisher, while remaining imperceptible to others 
and being hard to remove or distort. Note that the goals of the graph owner 
and the adversary are to some extent inverted in graph watermarking, 
with respect to active attacks. Moreover, since the graph owner knows 
the entire graph, he can profit from this knowledge for building the watermark. 
During the sybil subgraph creation phase of an active attack, only a partial 
view of the social graph is available to the attacker. 

\section{Conclusions}\label{sec-conclusions}

In this study, we have re-assessed the capabilities of active attackers 
in the setting of privacy-preserving publication of social graphs. 
In particular, we have given definitions of robustness 
for different stages of the active attack strategy and have shown, 
both theoretically and empirically, scenarios under which these notions 
of robustness lead to considerably more successful attacks. 
One particular criticism found in the literature, that of active attacks lacking 
resiliency even to a small number of changes in the network, 
has been shown in this paper not to be an inherent problem of the active 
attack strategy itself, but rather of specific instances of it. 
In light of the results presented here, we argue that active attacks 
should receive more attention by the privacy-preserving social graph 
publication community. In particular, existing privacy measures 
and anonymisation algorithms should be revised, and new ones should be devised, 
to account for the capabilities of robust active attackers.   

\vspace*{.2cm}
\noindent \textbf{Acknowledgements:} The work reported in this paper 
received funding from Luxembourg's Fonds National de la Recherche (FNR), 
via grant C17/IS/11685812 (PrivDA).

\end{document}